\documentclass[journal]{IEEEtran}

\usepackage[pdftex]{graphicx}

\iffalse
\ifCLASSINFOpdf
\usepackage[pdftex]{graphicx}
\graphicspath{{../pdf/}{../jpeg/}}
\DeclareGraphicsExtensions{.pdf,.jpeg,.png}
\else
\usepackage[dvips]{graphicx}
\graphicspath{{../eps/}}
\DeclareGraphicsExtensions{.eps}
\fi
\fi

\usepackage{algorithm}

\usepackage[noend]{algpseudocode}

\iffalse
\ifCLASSOPTIONcaptionsoff
\usepackage[nomarkers]{endfloat}
\let\MYoriglatexcaption\caption
\renewcommand{\caption}[2][\relax]{\MYoriglatexcaption[#2]{#2}}
\fi
\fi

\usepackage{url}

\usepackage[dvipsnames]{xcolor}

\newcommand{\RISCV}{\mbox{RISC-V}}

\usepackage{pifont}
\newcommand{\cmark}{\ding{51}}%
\newcommand{\xmark}{\ding{55}}%
\definecolor{ColorCmark}{RGB}{0,144,0}

\usepackage{amsthm}
\newtheorem{theorem}{Theorem}
\theoremstyle{definition}
\newtheorem{definition}{Definition}

\usepackage[nolist]{acronym}

\usepackage{amsmath}
\usepackage{amsfonts}
\usepackage{mathabx} 

\usepackage{hyperref}
\usepackage{tikz}

\newcommand\copyrighttext{\scriptsize %
	DOI: \href{https://doi.org/10.1109/TCAD.2024.3374249}{10.1109/TCAD.2024.3374249}. %
	\quad\textcopyright~2024 IEEE. Personal use of this material is permitted.
	Permission from IEEE must be obtained for all other uses, in any current or future
	media, including reprinting/republishing this material for advertising or promotional
	purposes, creating new collective works, for resale or redistribution to servers or
	lists, or reuse of any copyrighted component of this work in other works.
}

\newcommand\copyrightnotice{%
	\begin{tikzpicture}[remember picture,overlay]
		\node[anchor=south,yshift=7pt] at (current page.south) {\fbox{\parbox{\dimexpr\textwidth-\fboxsep-\fboxrule\relax}{\copyrighttext}}};
	\end{tikzpicture}%
}

\begin{document}
	
	\begin{acronym}
		\acro{hw}[HW]{hardware}
		\acro{sw}[SW]{software}
		\acro{tee}[TEE]{Trusted Execution Environment}
		\acro{rtl}[RTL]{Register-Transfer Level}
		\acro{boom}[BOOM]{Berkeley Out-of-Order Machine}
		\acro{ift}[IFT]{Information Flow Tracking}
		\acro{fsm}[FSM]{finite state machine}
		\acro{ipc}[IPC]{Interval Property Checking}
		\acro{upec}[UPEC]{Unique Program Execution Checking}
		\acro{sva}[SVA]{SystemVerilog Assertions}
		\acro{isa}[ISA]{Instruction Set Architecture}
		\acro{fu}[FU]{functional unit}
		\acro{fp}[FP]{floating-point}
		\acro{rob}[ROB]{re-order buffer}
		\acro{lsu}[LSU]{load-store unit}
		\acro{csr}[CSR]{control and status register}
	\end{acronym}

	\newcommand{\TitleFirst}{%
		A Scalable Formal Verification Methodology\\%
		for Data-Oblivious Hardware%
	}
	
	\title{%
		\TitleFirst%
		\\%
	}
	
	\author{Lucas Deutschmann, %
		Johannes Müller, %
		Mohammad R. Fadiheh,\\%
		Dominik Stoffel, \IEEEmembership{Member, IEEE}, and %
		Wolfgang Kunz, \IEEEmembership{Fellow, IEEE}
		\thanks{L. Deutschmann, J. Müller, D. Stoffel, and W. Kunz are with the Department of Electrical and Computer Engineering, RPTU Kaiserslautern-Landau, Kaiserslautern, Germany.}
		\thanks{M. R. Fadiheh is with the Department of Electrical Engineering, Stanford University, Stanford, USA.}
		\thanks{The reported research was supported in part by DFG SPP Nano Security, KU 1051/11-1, in part by BMBF ZuSE (Scale4Edge), 16ME0122K-16ME0124, and in part by Intel Corporation (Scalable Assurance).}
	}

	\maketitle
	
	\copyrightnotice
	
\begin{abstract}
  The importance of preventing microarchitectural timing side channels
  in security-critical applications has surged in recent years. %
  Constant-time programming has emerged as a best-practice technique
  for preventing the leakage of secret information through timing. %
  It is based on the assumption that the timing of certain basic
  machine instructions is independent of their respective input
  data. %
  However, whether or not an instruction satisfies this data-independent
  timing criterion varies between individual processor microarchitectures. %
  
  In this paper, we propose a novel methodology to formally verify
  data-oblivious behavior in hardware using standard property checking
  techniques. %
  The proposed methodology is based on an inductive property that enables scalability even to complex out-of-order cores. %
   We show that proving this inductive property is sufficient to
  exhaustively verify data-obliviousness at the microarchitectural
  level. %
  In addition, the paper discusses several techniques that can be used to make the
  verification process easier and faster. %
  
  We demonstrate the feasibility of the proposed methodology through case studies on several open-source designs. %
  One case study uncovered a data-dependent timing violation in the extensively verified and highly secure IBEX RISC-V core. %
  In addition to several hardware accelerators and in-order processors, our experiments also include RISC-V BOOM, a complex out-of-order processor, highlighting the scalability of the approach. %

\end{abstract}

\begin{IEEEkeywords}
	Data-Oblivious Computing, Formal Verification, Hardware Security, Constant-Time Programming
\end{IEEEkeywords}

\IEEEpeerreviewmaketitle

\section{Introduction}
\label{sec:introduction}

\IEEEPARstart{I}{n} recent years, the view on \acf{hw} as a root-of-trust
has been severely damaged. %
A flood of new security vulnerabilities renewed the focus on
microarchitectural side channels. %
Both \ac{sw} and \ac{hw} communities have proposed numerous
countermeasures to these new security gaps. %
However, to fully meet the stringent demands of security-critical
applications, a holistic combination of multiple countermeasures is
needed that takes consideration of the entire system stack. %

The most prominent \ac{sw} paradigm that tries to mitigate
microarchitectural timing side channels is known as
\textit{data-oblivious}, or \textit{constant-time},
\textit{programming}~\cite{2009-CoppensVerbauwhede.etal,
  2020-CauligiDisselkoen.etal, web-bearssl, web-ctcrypto}. %
It works by the assumption that the timing and the resource usage of
certain operations inside a processor are independent of their
respective input data. %
Constant-time programming is an actively studied discipline with many
important contributions from the \ac{sw} community, including
open-source libraries~\cite{2015-AndryscoKohlbrenner.etal,
  2021-MeierColombo.etal, web-bearssl}, domain-specific
languages~\cite{2017-CauligiSoeller.etal}, verification
tools~\cite{2016-AlmeidaBarbosa.etal, 2016-RodriguesQuintao.etal,
  2017-ReparazBalasch.etal, 2020-CauligiDisselkoen.etal,
  2022-ChakrabortyCruz.etal} and dedicated
compilers~\cite{2018-AndryscoNotzli.etal, 2020-BartheBlazy.etal}. %
The term "constant-time" can, however, be misleading, as there is no
need for execution times to be \textit{constant}. %
A variable operation timing is acceptable, as long as it depends only
on public information. %
For example, in constant-time programming, the program itself is
considered public information. %
With this assumption in mind, consider a Read-After-Write (RAW) hazard
in a pipelined processor causing a stall. %
The resulting change in instruction timing is legal because it is
based on the \textit{public} sequence of instructions, not on its
(potentially confidential) operands. %

The data-oblivious subset of a processor's instruction set is often
referred to as the set of \textit{oblivious HW primitives}. %
More complex instructions, which could potentially be insecure, are
decomposed into these primitives to ensure a data-oblivious
behavior. %
As a simple example, consider a processor which implements a
multi-cycle multiplication. %
A common optimization in a HW multiplier is to check whether %
one of the operands is zero and, if so, produce the result after a single
cycle. %
This creates a data-dependent timing and a possible side channel if
the operand contains confidential information. %
Instead of issuing a multiplication, constant-time programming would
therefore resort to replacing the multiply instruction by a sequence
of primitive instructions like \textit{add} and \textit{shift}. %
However, there is no guarantee that even these simple instructions are
actually \textit{oblivious} HW primitives. %
Wether or not an instruction fulfills the criterion of
data-independent timing can vary between the implemented
microarchitectures. %
Yet, there is only little research on how to verify these assumptions
at the microarchitectural level~\cite{2017b-ArdeshirichamHu.etal,
  2019-GleissenthallKici.etal, 2021-KiciGleissenthall.etal,
  2022-DeutschmannMueller.etal}. %

To make things worse, a recent survey~\cite{2021-VicarteShome.etal}
highlighted seven classes of microarchitectural optimizations that all
undermine the constant-time paradigm. %
While exploiting some of these optimizations in commodity processors
may seem unrealistic, an attack called
Augury~\cite{2022-VicarteFlanders.etal} demonstrated that this threat
is not only a theoretical one. %
The work shows the security implication of one of these optimizations,
namely data memory-dependent prefetchers, as they are present in
modern Apple processors. %
In light of more and more such advanced
optimizations being implemented, the question arises as to how we can
restore the trust in \ac{hw}. %

To this end, we propose a novel
methodology for proving data-oblivious
execution of HW operations using standard formal property checking
techniques. %
For processors, the approach certifies a set of trusted \ac{hw}
primitives for data-oblivious programming. %
Our results show that even simple instructions like a logical shift
might suffer from an unexpected timing variability. %
We also found a potential, and preventable, timing vulnerability in
the extensively verified Ibex \RISCV{} core~\cite{web-ibex}. %
Furthermore, we extend the approach to out-of-order cores,
demonstrating its feasibility and scalability with an experiment on
the \ac{boom}~\cite{2020-ZhaoKorpan.etal}. %

In summary, this paper makes the following contributions: %
\begin{itemize}
  \item Sec.~\ref{sec:related-work} -- We provide a comprehensive
  overview of related approaches that aim to ensure
  data-obliviousness building upon an analysis at the HW level. %
  
  \item Sec.~\ref{sec:theoretical-foundation} -- We introduce a
  definition for data-oblivious execution at the microarchitectural
  level and formalize it using the notion of a 2-safety
  property over infinite traces. %
  Since most \ac{hw} designs are not built for data-obliviousness, we
  also introduce a weaker notion of \emph{input-constrained} data-obliviousness
 for general designs. %
 In order to make these properties verifiable in practice, we present how the definitions over infinite traces can be transformed to inductive properties that span over only a single clock cycle. %
 
\item Sec.~\ref{sec:methodology} -- We propose a formal verification
  methodology, called \emph{Unique Program Execution Checking for
    Data-Independent Timing (UPEC-DIT)}~\cite{2022-DeutschmannMueller.etal}, that can
  exhaustively detect any violation of data-obliviousness at the
  \ac{hw} level. %
  The methodology is based on standard formal property languages, and
  can therefore be easily integrated into existing formal verification
  flows. %
  When applied to processor implementations, UPEC-DIT can be used to qualify the instructions of a microarchitectural ISA implementation regarding their data-obliviousness. %
  These data-oblivious instructions constitute HW primitives for countermeasures such as constant-time programming and can therefore serve as a \ac{hw} root-of-trust for the higher levels of the system stack. %
  
      \item Sec.~\ref{sec:optimizations} -- We present several
        optimization techniques, such as black-boxing, proofs over an
        unrolled model, and cone-of-influence reduction, which can
        further increase the scalability and usability of the proposed
        methodology. %
  
      \item Sec.~\ref{sec:experiments} -- We demonstrate the
        feasibility of our approach through case studies on multiple
        open-source \ac{rtl} designs. %
  Besides several \ac{hw} accelerators and in-order processors, our experiments also
        cover the \ac{boom}~\cite{2020-ZhaoKorpan.etal}, a superscalar
        \RISCV{} processor with \ac{fp} support, a deep 10-stage
        pipeline and out-of-order execution. %
\end{itemize}

\section{Related Work}
\label{sec:related-work}

One of the first works that aims to formally verify data-obliviousness on the microarchitectural level is Clepsydra~\cite{2017b-ArdeshirichamHu.etal}. %
In their approach, the authors instrument the \ac{hw} with \ac{ift}~\cite{2022-HuArdeshiricham.etal, 2014-SubramanyanArora} in such a way that it tracks
not only explicit but also implicit timing flows. %
The verification engineer is then able to perform simulation, emulation or formal verification to verify timing flow properties on
the instrumented design. %
We believe that the ability to utilize a simulation-based approach can be particularly useful when dealing with very large designs. %
The experiments conducted using formal verification in Clepsydra are, however, restricted to individual functional units, e.g., cryptographic
cores. %
The additional logic added to monitor timing flows introduces a complexity overhead within the design that can be prohibitively large when trying to formally verify data-independent timing in commercial-sized processors. %

IODINE~\cite{2019-GleissenthallKici.etal} and XENON~\cite{2021-KiciGleissenthall.etal} are tools to formally verify data-obliviousness at the \ac{rtl}. %
To this end, they build their own tool chain and annotation system. %
While this approach is an important contribution to restoring trust in HW, its dependency on a custom tool environment may be an obstacle to adoption in industry. %
In contrast, in our proposed methodology, we use standard \ac{sva} which allows us to leverage any commercial (or open source) property checking tool. %
Our goal is to complement existing formal workflows, creating a synergy between functional and security verification, and thus lowering the barrier to industrial adoption. %
	
Although XENON makes considerable scalability improvements over its predecessor IODINE, it may face complexity hurdles when dealing with commercial microarchitectures. %
XENON models the entire propagation path from the data inputs (sources) to the control outputs (sinks). %
For complex systems, however, the complexity of unrolling only a few clock cycles can already be prohibitive. %
In contrast, our work employs an inductive reasoning, which results in a property that spans only a single clock cycle. %
This significantly improves the scalability of formal security verification for data-obliviousness. %
For the first time, we present a method that is applicable to large processors featuring out-of-order execution. %

A related line of research aims to establish a formally defined relationship between \ac{hw} and \ac{sw} by formulating and verifying so-called \ac{hw}-\ac{sw}~contracts~\cite{2023-WangMohr.etal}. %
Similar to~\cite{2022-DeutschmannMueller.etal}, they can be used to prove data-independent timing on an instruction-level granularity. %
However, current experiments only cover in-order processor designs up to a pipeline depth of three stages. %
A promising and ongoing related work named \textit{TransmitSynth}~\cite{2022-HsiaoMulligan.etal, 2021-HsiaoMulligan.etal} maps such a contract to a verifiable \ac{sva} property in order to automatically detect data-dependent side effects. %
 \textit{TransmitSynth} can enumerate microarchitectural execution paths for a given instruction under verification. %
This allows for a fine-grained categorization of leakage scenarios. %
However, the workflow includes a manual annotation of so-called \textit{Performing Locations}~(PLs), which are identifiers that mark an instruction execution path. %
Correctly marking these PLs requires some knowledge of the underlying design and could be increasingly difficult with more complex systems, especially for deep out-of-order processors like \ac{boom}. %
 
Other work pursues augmenting the \ac{isa} with information about the data-obliviousness of
instructions. %
In fact, both Intel~\cite{web-intel-dit} and ARM~\cite{web-arm-dit} have
  recently added support for instructions with data-independent timing. With the same goal in mind, \RISCV{} has just ratified the \textit{Cryptography Extension for Scalar \& Entropy
  Source Instructions}~\cite{web-riscvCrypto}. %
A subset of this extension, denoted \textit{Zkt}, requires a processor
to implement certain instructions of standard extensions with a
data-independent execution latency. %
This \textit{ISA contract} provides the programmer with a safe subset
of instructions that can be used for constant-time programming. %
Another work on a \textit{Data-Oblivious ISA Extension
  (OISA)}~\cite{2019-YuHsiung.etal} proposes to
refine the \ac{isa} with information about the data-obliviousness of
each instruction. %
The authors then develop \ac{hw} support for this to track wether
confidential information can reach unsafe operands. %
The method proposed here is complementary to the research efforts on such architectures as it provides a tool to verify their security. %

Finally, we point out that research on more efficient formal solvers, such as \textit{IC3/PDR}~\cite{2011-Bradley}, is complementary to the proposed approach. %
Our work proposes a solver-agnostic formal verification methodology, and any progress on more efficient model checkers can contribute to improved scalability of our method. %

\section{Theoretical Foundation}
\label{sec:theoretical-foundation}

In this section, we introduce a formal notation that we will use
throughout this paper (Sec.~\ref{subsec:definitions}). %
We formally define data-obliviousness at the microarchitectural level
(Sec.~\ref{subsec:data-obliviousness-in-hw}) and then develop a weaker
definition that is suitable for general circuits not
specifically designed for data-obliviousness
(Sec.~\ref{subsec:constrained-data-obliviousness}). %
In order to ensure scalability for more complex designs, we translate
these definitions, which are formulated over infinite traces,
into~1-cycle inductive properties
(Sec.~\ref{subsec:do-in-practice}). %
We prove that these inductive properties are equivalent to the
corresponding definitions over infinite traces. %
To conclude this chapter, we address some interesting special cases
(Sec.~\ref{subsec:special-cases}). %

\subsection{Definitions}
\label{subsec:definitions}

We first introduce some general notations to reason about
data-obliviousness as a \ac{hw} property. %
We model a digital \ac{hw} design as a
standard \ac{fsm} of Mealy type,
$M = (I, S, S_{0}, O, \delta, \lambda)$, with finite sets of input
symbols~$I$, output symbols~$O$, states~$S$, initial
states~$S_0 \subseteq S$, transition
function~$\delta: S \times I \mapsto S$ and output
function~$\lambda: S \times I \mapsto O$. %
The interface sets~$I$, $O$ % 
and the state set~$S$ are encoded in (binary-valued) input
variables~$X$, output variables~$Y$ and state variables~$Z$,
respectively. %

A key observation is that, in a \ac{hw} design, the timing of a module
is dictated by its control behavior. %
Accordingly, we partition each interface set into two disjoint subsets in order
to separate \textit{control}~($C$) from \textit{data} ($D$). %
We denote these sets as~$X_C$, $X_D$, $Y_C$ and~$Y_D$, with

\[X_C \cup X_D = X; X_C \cap X_D = \emptyset\]
\[Y_C \cup Y_D = Y; Y_C \cap Y_D = \emptyset\]

In practice, this partitioning of the interface is straightforward and
is usually done manually based on the specification of the design. %
For example, the operands and the result of a functional unit are
considered \emph{data}, whereas any handshaking signals that trigger
the start of a new computation or indicate that a provided result is
valid belong to \emph{control}. %
  
We further define a \textit{trace} $\tau=\{e_0, e_1, ...\}$ to be  a sequence of events, with an event~$e_t$ being a tuple
$(i_t,s_t,o_t)$, where~$i_t$ is the valuation of our design's input
variables~$X$ at time point (clock cycle)~$t$, $s_t$ is its state, as
represented by the value of~$Z$ at~$t$ and~$o_t$ is the valuation of
its output variables~$Y$ at~$t$. %
Let~$T$ be the set of all infinite traces of the design where $s_0 \in S_0$. %
 
We introduce the following definitions: %

\begin{itemize}
\item $s(\tau)=\{s_0, s_1, ...\}$ is the sequence of valuations to the
  design's state variables~$Z$ in the trace $\tau \in T$. %
   
\item $i(\tau)=\{i_0, i_1, ...\}$ is the sequence of valuations to the
  design's input variables~$X$ in the trace $\tau \in T$. %
  Likewise, $i_C(\tau)$ is the sequence of valuations to~$X_C$ and
  $i_D(\tau)$ is the sequence of valuations to~$X_D$. %
  
\item $o(\tau)=\{o_0, o_1, ...\}$ is the sequence of valuations to the
  design's output variables~$Y$ in the trace $\tau \in T$. %
  Likewise, $o_C(\tau)$ is the sequence of valuations to~$Y_C$ and
  $o_D(\tau)$ is the sequence of valuations to~$Y_D$. %
  
\item With a slight abuse of notation, we also allow the above functions to
    take single events as arguments, e.g., $s(e_t) = s_t$. 
  
\item For any $t \in \mathbb{N}_{0}$, the notation $\tau[t]$
  represents the event~$e_t$ at time point~$t$. %
  For example, $s(\tau[t])=s_t$ represents the valuation to the
  state variables in~$Z$ at time point~$t$. %
  
\item Similarly, we define the notion $[t..]$ as an infinite time
  interval beginning with and including~$t$. %
  Accordingly, $\tau[t..]$ represents an infinite subsequence of~$\tau$ that
  starts at and includes the time point~$t$. %
\end{itemize}

\subsection{Data-Obliviousness in HW}
\label{subsec:data-obliviousness-in-hw}

Having established a basic notation, we proceed to defining
data-oblivious behavior at the microarchitectural level. %
In our threat model, we assume that an attacker cannot access confidential information directly. %
The attacker is, however, able to observe the control signals of the
design under attack, e.g., by monitoring bus transactions. %
This means that, for a \ac{hw} module to be data-oblivious, the data
used in the computations must not affect the operating time or cause
other microarchitectural side effects. %
Based on this intuition we formulate a definition of the security
feature under verification: %

\begin{definition}[\textit{Data-Obliviousness}]
\label{def:data-oblivious}
\hfill \\
A \ac{hw} module is called \textit{Data-Oblivious (DO)} if the sequence of values of its control outputs~$Y_C$ is uniquely determined by the
sequence of values at the control inputs~$X_C$. %
\qed%
\end{definition}

We can express Def.~\ref{def:data-oblivious} formally as a 2-safety
property over infinite traces. %
We formalize it as follows: %
For any two infinite traces running on the design whose starting
  states at time~$t$ are identical and which receive the same control
  input sequences from time point~$t$ on, data-obliviousness guarantees that
  the control outputs after~$t$ are also identical: %

\begin{equation}
\label{eq:data-oblivious}
\begin{split}
  \textit{\textbf{DO}} \coloneq \
  & \forall \tau_1, \tau_2 \in T,\ \forall t \in \mathbb{N}_0: \\
  & s(\tau_1[t]) = s(\tau_2[t]) \land i_C(\tau_1[t..]) = i_C(\tau_2[t..]) \\
  & \Rightarrow o_C(\tau_1[t..]) = o_C(\tau_2[t..])
\end{split}
\end{equation}

\subsection{Input-Constrained Data-Obliviousness}
\label{subsec:constrained-data-obliviousness}

Def.~\ref{def:data-oblivious} of data-obliviousness is fairly
straightforward. %
Put simply, it ensures that the control behavior of a \ac{hw} design
is independent of the data it processes. %
Unfortunately, this strict definition works only for \ac{hw} that is carefully
designed for data-obliviousness. %
In general, however, designs are not data-oblivious. %
A processor must be able to make decisions based on the data it is
processing, for example, when it executes conditional \textit{branch} instructions. %
Constant-time programming tries to prevent
data-dependent timing by excluding such instructions from the
security-critical parts of the program. %
Data-obliviousness is achieved by restricting the program to only
use a data-oblivious subset of the ISA. %
Consequently, in order to qualify a microarchitectural implementation for
data-obliviousness, we require a separation between data-oblivious and
non-data-oblivious operations at the \ac{hw} level. %
This means, we must systematically identify and formally verify the
\textit{control input configurations} under which the design operates data-independently. %
In practice, this requires constraining  
the possible input values to a
legal subset that ensures data-obliviousness. %

\begin{definition}[Input-constrained Execution]
  \label{def:input-constraint-alt}
  \hfill \\
  An \emph{input constraint} is a non-empty subset $\phi \subseteq I$ of the possible inputs to the design. %
  An \emph{input-constrained trace} $\tau_{\phi}$ is an infinite
  trace in which the
  inputs to the design are constrained by $\phi$, i.e.,
  $i(\tau_{\phi}[t]) \in \phi$ for every $t \in \mathbb{N}$. %
  The subset $T_{\phi}\subseteq T$ of all traces constrained by
  $\phi$ is called \emph{input-constrained execution} of the
  design. %
  \qed%
\end{definition}

We can now modify Def.~\ref{def:data-oblivious} to introduce a weaker
notion of data-obliviousness. %
We call a \ac{hw} design that provides a data-oblivious subset of its functionality
\textit{input-constrained data-oblivious}. %

\begin{definition}[Input-constrained Data-Obliviousness]
  \label{def:constrained-data-oblivious}
  \hfill \\
  A \ac{hw} design is called \emph{input-constrained data-oblivious ($DO_\phi$)}
  if, for a given input constraint $\phi \subseteq I$,
  the values of its control outputs~$Y_C$ are uniquely determined by
  the sequence of control inputs~$X_C$. %
  \qed%
\end{definition}

In essence, Def.~\ref{def:constrained-data-oblivious} partitions the
design behavior into data-oblivious and non-oblivious \ac{hw}
operations. %
The \ac{hw} runs without any observable side effect on the
architectural level, as long as only inputs within the constraint
$\phi$ are given. %
For the special case of $\phi = I$,
Def.~\ref{def:constrained-data-oblivious} is equivalent to the
original Def.~\ref{def:data-oblivious}. %

The following trace property formalizes the requirement of input-constrained data-obliviousness: % 
 
\begin{equation}
  \label{eq:constrained-data-oblivious}
  \begin{split}
    \textit{\textbf{DO}}_\phi \coloneq \
    & \exists\phi, \ \forall \tau_1, \tau_2 \in T_{\phi}, \ \forall t \in \mathbb{N}_{0}: \\
    & s(\tau_1[t]) = s(\tau_2[t]) \land i_C(\tau_1[t..]) = i_C(\tau_2[t..]) \\
    & \Rightarrow o_C(\tau_1[t..]) = o_C(\tau_2[t..])
  \end{split}
\end{equation}

\subsection{Formally Proving Data-Obliviousness in Practice}
\label{subsec:do-in-practice}

 In the previous subsections, we defined data-obliviousness as a
property over infinite traces. %
 Most commercial model checking tools, however, reason about
sequential circuits by unrolling the combinatorial part of the design
over a finite number of clock cycles. %
Therefore, our definitions of data-obliviousness formalized over an
\emph{infinite} number of clock cycles are not yet suitable for being checked on practical tools. %

For exhaustive coverage of every possible design behavior, the circuit
must be unrolled to its \emph{sequential depth}. %
The sequential depth of a circuit is the minimum number of clock cycles needed to reach all possible states, typically
starting from the reset state of the design. %
For most practical designs, the sequential depth can easily
reach thousands of clock cycles. %
To make things worse, the sheer size of an industrial design may
make it impossible for the property checker to unroll for more than a few clock cycles, even for highly optimized
commercial tools. %
Therefore, it is usually not possible to verify such a design
\emph{exhaustively} with bounded model checking from the reset
state. %

\ac{ipc}~\cite{2008-NguyenThalmaier.etal, 2014-UrdahlStoffel.etal}
provides unbounded proofs and can scale to large designs  by starting from a symbolic initial
state. %
Instead of traversing a large number of states from reset, IPC starts from an arbitrary ``any'' state. %
However, there are two challenges associated with this approach. %

The first challenge stems from the nature of the symbolic initial
state. %
Since the proof starts from \emph{any} possible state, it also
includes states that are unreachable from reset. %
This can lead to false counterexamples, since the property in question
may hold for the set of reachable states, but fail for certain unreachable states. %
This challenge arises not only in security verification, but in all
formal verification approaches that use a symbolic initial state. %
A standard approach to address this problem is to add \textit{invariants} to the proof: %

\begin{definition}[\textit{Invariant}]
  \label{def:invariant}
  \hfill \\
  For a given HW design, we call a subset of states $B \subseteq S$ an
  \emph{invariant} if for every state $s \in B$ all its
  successor states~$s'$ are also in~$B$. %
  This means that,
  \[ \forall s \in S, \forall i \in I: s \in B \Rightarrow \delta(s, i) \in B \] %
  \qed%
\end{definition}

In other words, an invariant is a set of states that is closed under
reachability. %
To simplify the notation, we implicitly assume that $S_0 \subseteq B$
in the rest of the paper, i.e., we only consider invariants that
include the reset states. %

Even when using a symbolic initial state that ``fast-forwards'' the system to an arbitrary execution state, the data must still propagate from the input through the system
before it can affect a control output. %
Therefore, the second challenge is to handle the structural depth, i.e., the length of the propagation path from $X_D$ to $Y_C$, of the design. %
For large and complex designs, unrolling the circuit is costly and quickly reaches the capacity of formal tools. %
In such cases, it is not possible to make exhaustive statements about
a design's data-obliviousness based on its I/O behavior alone and we
need to look into the internal state of the system. %

For this problem, we propose an \emph{inductive property} for
data-obliviousness that spans a single clock cycle and that also
considers internal state signals. %
Just like for the input/output signals, we partition the set of state variables~$Z$ into
control variables~$Z_C$ and data variables~$Z_D$, where
$Z_C \cup Z_D = Z$ and $Z_C \cap Z_D = \emptyset$. %
Accordingly, $s_C(\tau)$ is the sequence of valuations to~$Z_C$ and
$s_D(\tau)$ is the sequence of valuations to~$Z_D$. %
For the sake of a simplified notation, we also let~$i_C$, $o_C$
and~$s_C$ take an input symbol, output symbol or state of the Mealy
machine and return the valuation of the corresponding subset of
control signals~$X_C$, $Y_C$ and~$Z_C$, respectively. %
We present how to systematically partition~$Z$ into~$Z_C$ and~$Z_D$ later in Sec.~\ref{sec:methodology} such that this process is always conservative in terms of security. %

The data-obliviousness property that we use as an element of our inductive reasoning is shown in
Eq.~\ref{eq:do-1-cycle}. %

\begin{equation}
  \label{eq:do-1-cycle}
  \begin{split}
    \textit{\textbf{DO'}} \coloneq \
    & \exists B \subseteq S, \ \forall s_1, s_2 \in B,\ \forall i_1, i_2 \in I: \\
    & s_C(s_1) = s_C(s_2)  \land i_C(i_1) = i_C(i_2) \\ 
    & \Rightarrow s_C(\delta(s_1, i_1)) = s_C(\delta(s_2, i_2)) \\
    & \land o_C(\lambda(s_1, i_1)) = o_C(\lambda(s_2, i_2))
  \end{split}
\end{equation}
  
This property expresses that if we have two instances of the system for which the control inputs
and states have equal values, then the control state variables will also be equal in the next states of
  the two instances. %
This weakens the initial assumption that a discrepancy between the two
instances can originate only from the data inputs. %
By allowing internal data (state) signals to take arbitrary values, we
implicitly model any propagation of data through the system by the symbolic initial state. %
 
It is important to remember that the invariant~$B$ is a superset of
  the reachable states because we require it to include the initial
  states, $S_0 \subseteq B$. %
In practice, the security property of Eq.~\ref{eq:do-1-cycle} holds
not only in the reachable state set but, often, also in many unreachable states. %
Therefore, finding a suitable invariant for the given property is usually less of a problem than may be generally expected. %
In our proposed methodology (Sec.~\ref{sec:methodology}), we systematically create the necessary invariant in an iterative procedure and prove it on the fly along with the property for data-obliviousness. %

In the same way as for Eq.~\ref{eq:do-1-cycle}, we can derive an
  inductive property for data-obliviousness
  when the set of allowed
  inputs to the design is restricted by a constraint~$\phi$ (Def.~\ref{def:constrained-data-oblivious}). %
This causes a restriction also in the set of reachable states, which
must be expressed by an invariant. %
To this end, we extend Def.~\ref{def:invariant} and denote $B_\phi$ as
a set of states for which
$\forall s \in S, \forall i \in \phi: s \in B_\phi \Rightarrow
\delta(s, i) \in B_\phi$ under a given constraint $\phi$. %
  As an example, assume that the input constraint~$\phi$ excludes branch
  instructions from entering the pipeline of our processor. %
  A corresponding invariant~$B_\phi$ excludes all states that involve
  processing of such branch instructions. %
We elaborate on how to systematically derive such an invariant in Sec.~\ref{sec:methodology}. %

Eq.~\ref{eq:cdo-1-cycle} shows the inductive property for input-constrained data-obliviousness. %
  
\begin{equation}
  \label{eq:cdo-1-cycle}
  \begin{split}
    \textit{\textbf{DO'}}_\phi \coloneq \
    & \exists \phi, \ \exists B_\phi \subseteq S, \ \forall s_1, s_2 \in B_\phi,\ \forall i_1, i_2 \in \phi: \\
    & s_C(s_1) = s_C(s_2) \land i_C(i_1) = i_C(i_2) \\
    & \Rightarrow s_C(\delta(s_1, i_1)) = s_C(\delta(s_2, i_2)) \\
    & \land o_C(\lambda(s_1, i_1)) = o_C(\lambda(s_2, i_2))
  \end{split}
\end{equation}

We now show that, for any \ac{hw} design, our inductive properties
cover their corresponding definitions over infinite traces. %

  \begin{theorem}[(\ref{eq:do-1-cycle}) $\Rightarrow$
    (\ref{eq:data-oblivious}) and (\ref{eq:cdo-1-cycle}) $\Rightarrow$
    (\ref{eq:constrained-data-oblivious})]
    If a \ac{hw} design operates (constrained) data-obliviously
    according to the inductive property
    Eq.~\ref{eq:do-1-cycle}~(Eq.~\ref{eq:cdo-1-cycle}), it also
    operates (constrained) data-obliviously according to the property
    over infinite traces
    Eq.~\ref{eq:data-oblivious}~(Eq.~\ref{eq:constrained-data-oblivious}). %
\end{theorem}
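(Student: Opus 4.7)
My plan is to handle both implications simultaneously by induction on the time offset $k$ from the distinguished time point $t$. I would write out the unconstrained case (\ref{eq:do-1-cycle}) $\Rightarrow$ (\ref{eq:data-oblivious}) in detail; the constrained variant is identical after substituting $T_\phi$, $\phi$, $B_\phi$ for $T$, $I$, $B$, and appealing to Def.~\ref{def:input-constraint-alt} to guarantee that inputs along any trace in $T_\phi$ remain in $\phi$ (so that closure of $B_\phi$ under $\delta$ restricted to $\phi$-inputs still applies).

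Fix arbitrary $\tau_1, \tau_2 \in T$ and $t \in \mathbb{N}_0$ satisfying the premise of (\ref{eq:data-oblivious}): $s(\tau_1[t]) = s(\tau_2[t])$ and $i_C(\tau_1[t..]) = i_C(\tau_2[t..])$. I would prove by induction on $k \geq 0$ the conjunction of (a) $s_C(\tau_1[t+k]) = s_C(\tau_2[t+k])$ with both full states in $B$, and (b) $o_C(\tau_1[t+k]) = o_C(\tau_2[t+k])$. In the base case $k=0$, the premise immediately gives equal full states at $t$, hence equal control-state projections. Instantiating DO' with $s_j = s(\tau_j[t])$ and $i_j = i(\tau_j[t])$ for $j \in \{1,2\}$ produces (b) at $k=0$ and equal control states at $t+1$. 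In the step, the hypothesis at $k$ combined with the assumed equality of control inputs at $t+k$ re-triggers DO', yielding (b) at $k$ and equal control states at $t+k+1$, which is what the next iteration needs. Aggregating over all $k$ gives $o_C(\tau_1[t..]) = o_C(\tau_2[t..])$, i.e.\ (\ref{eq:data-oblivious}).

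The main subtlety — and the only step that requires genuine care — is justifying that the states visited by $\tau_1$ and $\tau_2$ from time $t$ onwards actually lie in the invariant $B$, since DO' is quantified only over $s_1, s_2 \in B$. Here the two conditions $S_0 \subseteq B$ and the closure clause of Def.~\ref{def:invariant} are decisive: a short secondary induction over the prefix $[0..t+k]$ of each trace, starting from $s_0 \in S_0 \subseteq B$ and applying $\delta$-closure at each step, shows that every state on a trace in $T$ is in $B$. In the constrained setting the same secondary induction goes through because the $\phi$-closure of $B_\phi$ only demands that the applied inputs belong to $\phi$, which is precisely what membership in $T_\phi$ supplies. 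No further technical hurdle appears: the Mealy decomposition built into DO' cleanly separates the current-cycle control output from the next control state, so one application of the inductive property per time step produces both parts (a) and (b) at offset $k+1$ without needing to reason about longer unrollings.
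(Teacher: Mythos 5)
Your proposal is correct and rests on the same core argument as the paper's proof: the chain of applications of the one-cycle property that propagates control-state equality forward from time $t$, cycle by cycle. The paper packages this as a proof by contradiction (assuming a first diverging control output at some $t_k>t$ and refuting it), whereas you run the induction directly and, usefully, make explicit the secondary induction showing that all visited states lie in $B$ (resp.\ $B_\phi$) via $S_0 \subseteq B$ and $\delta$-closure --- a step the paper's proof leaves implicit.
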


\begin{proof}
  We show that (\ref{eq:cdo-1-cycle}) $\Rightarrow$
  (\ref{eq:constrained-data-oblivious}). This covers (\ref{eq:do-1-cycle}) $\Rightarrow$ (\ref{eq:data-oblivious})
  for the special case that $\phi=I$. %
  \ \\
  We prove this implication by contradiction. %
  Assume a \ac{hw} design fulfills property~(\ref{eq:cdo-1-cycle}) for
  a given $\phi$ but violates property~(\ref{eq:constrained-data-oblivious}), i.e., there exists a
  set of traces $\tau_1, \tau_2 \in T_{\phi}$ with
  $t \in \mathbb{N}_{0}: s(\tau_1[t]) = s(\tau_2[t]) \land
  i_C(\tau_1[t..]) = i_C(\tau_2[t..])$ such that
  $\exists t_k \in \mathbb{N}_{0}: o_C(\tau_1[t_k]) \neq
  o_C(\tau_2[t_k])$ where $t_k > t$. %
  (The case $t_k = t$ can be excluded since
  property~(\ref{eq:cdo-1-cycle}) holds on the design and
  $Z_C \subseteq Z$.) %

$o_C(\tau_1[t_k]) \neq o_C(\tau_2[t_k])$ requires that the antecedent of~(\ref{eq:cdo-1-cycle}) is violated at time point $t_k-1$. %
Since our initial assumption requires $i_C(\tau_1[t_k-1]) = i_C(\tau_2[t_k-1])$, it follows that $s_C(\tau_1[t_k-1]) \neq s_C(\tau_2[t_k-1])$. %
However, this is a contradiction to
\begin{equation*}
\begin{split}
s(\tau_1[t]) = s(\tau_2[t]) \overset{Z_C \subseteq Z}{\Rightarrow} & s_C(\tau_1[t]) = s_C(\tau_2[t]) \\
\overset{(\ref{eq:cdo-1-cycle})}{\Rightarrow} \ \ & s_C(\tau_1[t+1]) = s_C(\tau_2[t+1]) \\
\overset{(\ref{eq:cdo-1-cycle})}{\Rightarrow} ... \overset{(\ref{eq:cdo-1-cycle})}{\Rightarrow} \ \ & s_C(\tau_1[t_k-1]) = s_C(\tau_2[t_k-1])
\end{split}
\end{equation*}%
which means that our initial assumption is wrong, and therefore
(\ref{eq:cdo-1-cycle})~$\Rightarrow$~(\ref{eq:constrained-data-oblivious}). %
 \end{proof}

Proving the relationship "(\ref{eq:cdo-1-cycle}) $\Rightarrow$
(\ref{eq:constrained-data-oblivious})" is crucial to ensure the
validity of our proposed approach. %
If we can verify the inductive property on the design, we
have also verified that the property over infinite traces holds. %
Our approach requires us to find some system invariant for which
the inductive property holds. %
Usually, in practice, such an invariant can include the majority of the
unreachable states. This is key to the feasibility of the proposed methodology, as it makes finding a suitable invariant much more practical. %

The implication in the opposite direction, i.e., "(\ref{eq:constrained-data-oblivious})
$\Rightarrow$ (\ref{eq:cdo-1-cycle})", is not true for the general case. %
The reason for this is that Eq.~\ref{eq:cdo-1-cycle} can
over-approximate the state space, which could lead to false
counterexamples. %
However, this does not affect the validity of the proposed
methodology, as its goal is to prove the inductive property. %
  
\subsection{Special Cases}
\label{subsec:special-cases}

In Def.~\ref{def:data-oblivious}, we model the security-critical
timing behavior by a set of control output signals, e.g., signals
responsible for bus handshaking. %
The concern might arise as to how these properties can be formulated
if the design under verification does not implement any such control
interface at all. %
The question is: %
What is a realistic attacker model in this case? %
If an attacker can observe the data outputs themselves, any
constant-time provisions are futile. %
If we assume a scenario, in which the attacker can spy on
internal signals to determine when an operation has finished, we can
prove data-independence for these signals instead. %
In any case, if there is no handshaking control implemented %
whatsoever, the specification must give detailed information about the timing behavior
of the system. %
Then, timing is an essential part of the circuit's functional
correctness and should be covered by conventional verification
methods. %

\section{Methodology}
\label{sec:methodology}

In this chapter, we present \textit{Unique Program Execution Checking for Data-Independent Timing (UPEC-DIT)} building upon earlier work in~\cite{2022-DeutschmannMueller.etal}. %
UPEC-DIT is a formal methodology to systematically and exhaustively detect data-dependent behavior at the microarchitectural level. %
In particular, we show how UPEC-DIT is used to verify data-obliviousness by proving the properties introduced in the previous chapter (Eq.~\ref{eq:do-1-cycle} and Eq.~\ref{eq:cdo-1-cycle}). %
In the following, for reasons of simplicity, when the distinction between the case of $\phi = I$ and the case of $\phi \subset I$ is irrelevant, we omit the term ''input-constrained'' and simply speak of ''data-obliviousness''. %
 
\subsection{UPEC-DIT Overview}
\label{subsec:upec-dit-overview}

\begin{figure}[ht]
  \centering
  \includegraphics[width=0.85\linewidth]{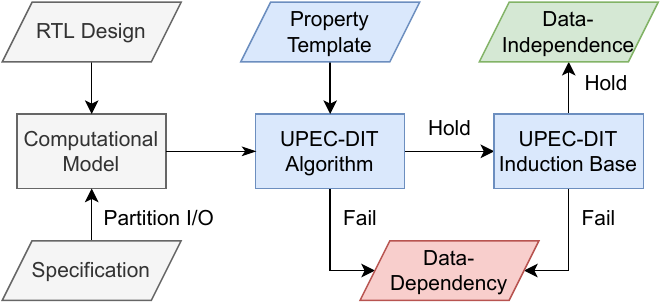}
  \caption{UPEC-DIT Flow}
  \label{fig:upec-dit-flow}
\end{figure}

Fig.~\ref{fig:upec-dit-flow} shows an overview of the different steps
of the methodology. %
We describe each step in more detail in the following subsections. %
We base our work on a methodology called
\textit{\ac{upec}}~\cite{2021-MuellerFadiheh.etal,
  2023-FadihehWezel.etal}. %
\ac{upec} utilizes \ac{ipc}~\cite{2008-NguyenThalmaier.etal,
  2014-UrdahlStoffel.etal} and a 2-safety model to systematically
and exhaustively trace the propagation of confidential information
through the system. %
Originally, \ac{upec} was devised to detect transient execution attacks in processors. %
In that scenario, the secret is stored at a protected location (\emph{data-at-rest}) from which it
must never leak into the architecturally visible
state. %

This paper extends over previous \ac{upec} approaches in that it targets a threat model for \emph{data-in-transit}, i.e., confidential information is processed legally, but must not cause any unwanted side effects. %
This threat model only allows the attacker to observe microarchitectural side effects resulting from the victim's processing of confidential data. %
These side effects take the form of variations in the execution time of the victim's operations. %
UPEC-DIT must therefore tolerate the processing of confidential data within the data path, i.e., it must discriminate between legal and illegal propagation of secrets through the hardware. %
Since these side effects are dictated by the control behavior of the \ac{hw}, the challenge for UPEC-DIT is to detect such secret-dependent control logic alterations. %
  
Our starting point is the \ac{rtl} description of the system. %
Based on the specification, we partition the I/O signals of the design into control and data. %
With this information, we create the computational model and
  initialize the main inductive property for UPEC-DIT which is then
  submitted to the main algorithm that implements the induction step in our global reasoning. %
During the execution of the algorithm, the property is iteratively
refined %
with respect to the internal state signals until it either holds or a counterexample is returned, describing a data-dependent behavior. %
This refinement procedure is conservative in the sense that a wrong decision may lead to a false counterexample, but never to a false security proof. %
If the property holds, the final step is to ensure that our
assumptions for the proof are valid by performing an induction base
proof. %
Once both, the induction step property and the induction base
  property, have been successfully verified we have obtained a formal
  guarantee that the design under verification operates in a
  data-oblivious manner. %
  
The result of our methodology is a set of data-oblivious HW primitives on which SW countermeasures, such as constant-time programming, can rely. %
Given such a certified set of instructions, and provided the \ac{sw} employs such a programming paradigm, data-dependent timing channels are avoided. %
With the same intention, processor vendors like Intel~\cite{web-intel-dit} and ARM~\cite{web-arm-dit} have added information about the data-obliviousness of individual instructions. %
The results of our methodology can be directly matched against such DIT specifications. %
For every instruction of the considered set, our method certifies data-independent timing of its operation and that it leaves no footprints in the design's control state containing information about confidential operands processed. %
Hence, any software composed from the certified set of instructions is data-oblivious. %

\subsection{Computational Model}
\label{subsec:computational-model}

\begin{figure}[ht]
  \centering
  \includegraphics[width=0.95\linewidth]{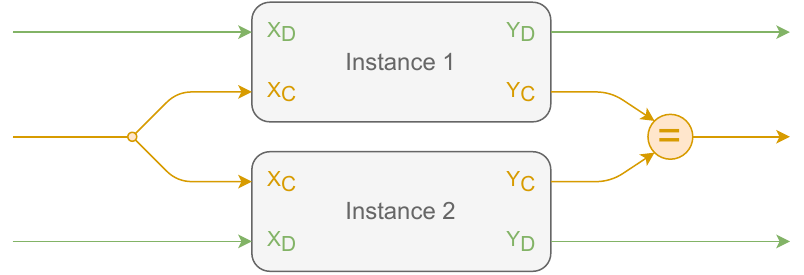}
  \caption{UPEC-DIT Computational Model}
  \label{fig:computational-model}
\end{figure}

Fig.~\ref{fig:computational-model} shows the abstract computational
model for the proposed methodology. %
Like previous \ac{upec} approaches, UPEC-DIT is based on a
2-safety computational model. %
In our model, inputs and outputs of the design are partitioned into
  control and data signals. %
  This is a manual step that, in most cases, is straightforward and
  can be done by consulting the design specification. %
Generally speaking, any confidential information passing through the system must be marked as data. %
After the partitioning, the generation of the 2-safety model can
be fully automated. %
The control inputs~$X_C$ take arbitrary but equal values, whereas
the data inputs~$X_D$ remain unconstrained. %
According to Def.~\ref{def:data-oblivious}, our goal is to prove that
for any sequence of inputs, the control outputs~$Y_C$ never diverge
from their respective counterparts in the other instance. %

\subsection{The UPEC-DIT Property}
\label{subsec:upec-dit-property}

Fig.~\ref{fig:upec-dit-property} shows our \ac{ipc} property template
to formally verify the data-obliviousness of the design. %
 
\begin{figure}[ht]
  \centering
  \begin{minipage}{0.9\linewidth}
    \fontsize{9}{11}\selectfont
    \begin{tabbing}
      XX\=XXXXXX\=XXX\=XXXX\=\kill
      \textbf{UPEC-DIT-Step($Z_C, Y_C, \phi, B_\phi$)}:\\
      \textcolor{blue}{assume:} \\
      \> \textcolor{blue}{at} $t$: \>\>\>\textit{Control\_State\_Equivalence($Z_C$)}; \\
      \> \textcolor{blue}{at} $t$: \>\>\>\textit{Invariants($B_\phi$)}; \\
      \> \textcolor{blue}{during} $t$..$t$+$1$: \>\>\>\textit{Input\_Constraints($\phi$)}; \\
      \textcolor{blue}{prove:} \\
      \> \textcolor{blue}{at} $t$+$1$: \>\>\>\textit{Control\_State\_Equivalence($Z_C$)}; \\
      \> \textcolor{blue}{at} $t$+$1$: \>\>\>\textit{Invariants($B_\phi$)}; \\
      \> \textcolor{blue}{during} $t$..$t$+$1$: \>\>\>\textit{Control\_Output\_Equivalence($Y_C$)}; \\
    \end{tabbing}
  \end{minipage}
  \caption{Interval property template for UPEC-DIT}
  \label{fig:upec-dit-property}
\end{figure} 
It expresses our abstract definitions introduced in
Sec.~\ref{sec:theoretical-foundation} by standard property languages such as \ac{sva}. %
We iteratively refine this property with respect to $Z_C$, $\phi$ and $B_\phi$ during the execution of the
UPEC-DIT algorithm. %
 We now describe the individual components of the property, expressed as \emph{macros} or \emph{functions}, in more detail: %

\begin{itemize}
\item \textit{Control\_State\_Equivalence()} constrains the state
  holding signals related to control ($Z_C$) to be equal in both
  instances of the computational model. %
  At the start of the algorithm, we set $Z_C = Z$, before iteratively
  refining this partitioning. %
  We elaborate on how this is done in
  Sec.~\ref{subsec:upec-dit-algorithm}.%
  
\item \textit{Input\_Constraints()} exclude unwanted behavior to
  achieve input-constrained data-obliviousness (Def.~\ref{def:constrained-data-oblivious}). %
  For systems specifically designed to be data-oblivious, such constraints might not be required. %
  An example for this macro would be ''no new (data-dependent) division issued
  to the processor pipeline''. %
  
\item \textit{Invariants()} are used to constrain the state space to
  exclude unreachable scenarios. %
  These invariants are iteratively deduced during the main algorithm
  (Sec.~\ref{subsec:upec-dit-algorithm}). %
  All invariants used to strengthen our properties are proven inductively \textit{on the fly}, along with data-obliviousness, which is why the macro \textit{Invariants()} is included also in the property commitment. %
   
  \item \textit{Control\_Output\_Equivalence()} is our main proof target and expresses that the outputs marked as control must never diverge. %
\end{itemize}

It is important to note that the control inputs are already
constrained in the computational model itself
(cf.~Fig.~\ref{fig:computational-model}), and, therefore, are not specified in the property. %

\subsection{The UPEC-DIT Algorithm}
\label{subsec:upec-dit-algorithm}

\begin{algorithm}
  \caption{The UPEC-DIT Algorithm}
  \label{alg:upec-dit}
  \begin{algorithmic}[1]
    \Procedure{UPEC-DIT}{}

    \State $Z_C \gets Z$                                                      \label{line:init}
    \State $\phi, B_\phi \gets \emptyset$
    \State \textit{CEX} $\gets $ IPC(UPEC-DIT-Step($Z_C, Y_C, \phi, B_\phi$)) \label{line:check1}
    \While {\textit{CEX} $\neq \{\}$}
    \If {$Y_{C1} \neq Y_{C2} \in$ \textit{CEX}}                               \label{line:output-discrepancy}
    \State \Return \textit{CEX}
    \EndIf
    \For {each $z \in Z_C : z_1 \neq z_2 \in$ \textit{CEX}}                   \label{line:internal-propagation}
    \If {$z$ is $Data$}                                                       \label{line:data-control-decision}
    \State $Z_C = Z_C \backslash \{z\}$                                       \label{line:data-propagation}
    \ElsIf {Invalid \textit{CEX}}
    \State Update($\phi,B_\phi$)                                              \label{line:add-invariant}
    \State $Z_C \gets Z$  
    \State \textit{break}
    \Else
    \State \Return \textit{CEX}                                               \label{line:return-cex}
    \EndIf
    \EndFor
    \State \textit{CEX} $\gets $ IPC(UPEC-DIT-Step($Z_C, Y_C, \phi, B_\phi$)) \label{line:check2}
    \EndWhile
    \State \Return \textit{hold}
    
    \EndProcedure
  \end{algorithmic}
\end{algorithm}

The basic idea of the UPEC-DIT algorithm is to use the counterexamples of the formal tool to iteratively
refine the set of all state signals~$Z$ into control and data subsets~$Z_C$
and~$Z_D$, respectively. %
Leveraging this partitioning of internal signals results in an inductive
proof over only a single clock cycle, which scales even for very large
designs. %

Alg.~\ref{alg:upec-dit} shows the algorithm in pseudocode. %
We begin by initializing the set of control states~$Z_C$ with the set of all state signals~$Z$ in Line~\ref{line:init}. %
In the beginning, the set of input constraints $\phi$ and invariants $B_\phi$ is empty. %
We then call the formal property checker in Line~\ref{line:check1} using the property in Fig.~\ref{fig:upec-dit-property}. %
 In almost all cases, this will produce a counterexample \textit{CEX} which shows a first propagation of data. %

We then continuously investigate the returned counterexamples to
decide on how to proceed: %
If a discrepancy has traveled to a control output
(Line~\ref{line:output-discrepancy}), we detected a data-dependent
timing and return the respective counterexample. %
If a propagation to one or multiple internal signals has occurred
(Line~\ref{line:internal-propagation}), the verification engineer has
to decide if this information flow was valid or not. %
Whenever a propagation to a data signal, e.g., a pipeline buffer, is
detected, we generalize the proof by removing this variable from the set
of control signals (Line~\ref{line:data-propagation}). %
In the case that the propagation touches a signal considered control,
e.g., pipeline \emph{stall} or \emph{valid} signals, the algorithm concludes and
returns the counterexample (Line~\ref{line:return-cex}). %

In some cases, the counterexamples may show behavior that is either
unreachable or invalid in the given application
  scenario. %
An example for an invalid behavior could be the exclusion of certain
instruction types for which data-obliviousness is not required, e.g.,
branches. %
To handle these cases, invariants or input constraints must be added to restrict the set of
  considered states (Line~\ref{line:add-invariant}). %
Afterwards, we also reset the set of control signals~$Z_C$. %
This is required for correctness, as an added input constraint
might also make some previously considered propagation impossible. %
In our experiments, however, resetting~$Z_C$ did not occur often and
  thus did not cause significant overhead. 
After every discrepancy in the counterexample has been investigated,
we rerun the proof with the new assumptions
(Line~\ref{line:check2}). %
 This is continued until the proof holds and no new counterexample can be found by the
formal property checker. %
In this case, the algorithm terminates. %
  We continue by verifying the
induction base, i.e., we check, as described in the following subsection, that our initial assumptions hold after
a system reset. %

We emphasize that the proposed property covers any interdependence between elements of a \emph{sequence} of HW operations. %
If such a sequence exists, the first operation produces a differing starting state for the later operation. %
This manifests itself as a discrepancy in either the data or control state space: %
A difference in the control state, however, violates the commitment of our property and leads to a counterexample. %
A discrepancy of data is already taken into account for any subsequent operation, since we assume that all data signals can take arbitrary values. %
As a result, UPEC-DIT never misses a timing channel that requires a specific combination of operations. %

In addition, is important to note that the proposed methodology is
  conservative in the sense that if a control-related signal from~$Z$
  is mistakenly declared as data
  (Line~\ref{line:data-control-decision}), it will not result in a
  false security proof. %
  In this case, the algorithm will continue until the propagation
  eventually reaches a control output~$Y_C$ and returns the
  corresponding counterexample. %
  The main purpose of the individual partitioning of the state set
  signals is to detect and stop such propagation as early as
  possible. %
 
\subsection{Induction Base}
\label{subsec:induction-base}

By successfully proving the inductive property (Fig.~\ref{fig:upec-dit-property}), we show that our 2-safety model never diverges in its control behavior during operation. %
For completing our proof of data-obliviousness, we also have to verify that the assumptions and invariants made in this inductive proof are correct. %
Therefore, the last step of the methodology is to prove the induction base, i.e., that the system starts in a data-oblivious state. %

\begin{figure}[ht]
  \centering
  \begin{minipage}{0.9\linewidth}
    \fontsize{9}{11}\selectfont
    \begin{tabbing}
      XX\=XXXXXX\=XXX\=XXXX\=\kill
      \textbf{UPEC-DIT-Base($Z_C, Y_C, \phi, B_\phi$)}:\\
      \textcolor{blue}{assume:} \\
      \> \textcolor{blue}{during} $t$-$k$..$t$-$1$: \>\>\>\textit{Reset\_Sequence()}; \\
      \> \textcolor{blue}{during} $t$-$k$..$t$: \>\>\>\textit{Input\_Constraints($\phi$)}; \\
      \textcolor{blue}{prove:} \\
      \> \textcolor{blue}{at} $t$: \>\>\>\textit{Control\_State\_Equivalence($Z_C$)}; \\
      \> \textcolor{blue}{at} $t$: \>\>\>\textit{Invariants($B_\phi$)}; \\
      \> \textcolor{blue}{at} $t$: \>\>\>\textit{Control\_Output\_Equivalence($Y_C$)}; \\
    \end{tabbing}
  \end{minipage}
  \caption{Interval property template for UPEC-DIT Induction Base}
  \label{fig:upec-dit-base}
\end{figure} 

Fig.~\ref{fig:upec-dit-base} shows our IPC property template for the
induction base. %
We want to show that the reachability assertions~$B_\phi$ we introduced during the iterative algorithm include the reset state of the system, which means that they are indeed invariants according to Def.~\ref{def:invariant}. %
Furthermore, we want to prove that the system is properly initialized
regarding its control behavior. %
In essence, \textit{Control\_State\_Equivalence()} ensures that all
control state signals~$Z_C$ are initialized. %
If this commitment fails, then the system could behave differently
after a reset, which could hint to a functional bug. %
Lastly, by assuming \textit{Control\_Output\_Equivalence()}, we verify
that there is no combinatorial path from a data input~$X_D$ to a
control output~$Y_C$. %

When both the base and step property hold, we have exhaustively
verified that our design operates data-obliviously, either in the
strong sense of Def.~\ref{def:data-oblivious}, or for a subset of its
total behavior in the weakened sense of
Def.~\ref{def:constrained-data-oblivious}. %
This represents an \emph{unbounded} formal proof that, due to its inductive
nature, can scale to very large designs. %

\section{Optimizations}
\label{sec:optimizations}

The methodology described Sec.~\ref{sec:methodology} verifies data
  oblivious-behavior at the microarchitectural level exhaustively. %
In practice, it may not always be necessary to be exhaustive and an efficient
bug-hunting may suffice for the intended application of some
designs. %
In addition, the low complexity of certain designs allows for
exhaustive verification of data-obliviousness at the I/O interface
level, without the need for invariants and constraints on the internal
behavior of the design. %
To this end, we now discuss some
enhancements and trade-off techniques that may be useful for applying
UPEC-DIT in practice. %

\subsection{Unrolled Proofs}
\label{subsec:unrolled-proofs}

The methodology presented in Sec.~\ref{sec:methodology} uses
an inductive proof with a single-cycle property to avoid complexity
issues. %
In this approach, the set of all possible data propagation paths is
over-approximated in the symbolic initial state by leaving
the values of internal data signals unconstrained. %
While this over-approximation leads to a very low proof complexity, it
implies the need to deal with the possibility of spurious counterexamples. %
Writing invariants can overcome this problem, but in some cases it is
affordable to simply increase the computational effort to avoid these problems. %
If the complexity of the system allows for a sufficient number of unrollings in our computational model, %
considering the full propagation path starting from
any data input to any control output can significantly reduce the number of false counterexamples
and thus the effort of writing invariants. %
 This unrolled approach represents the original UPEC-DIT methodology, as
described in~\cite{2022-DeutschmannMueller.etal}. %

\begin{figure}[ht]
  \centering
  \begin{minipage}{0.9\linewidth}
    \fontsize{9}{11}\selectfont
    \begin{tabbing}
      XX\=XXXXXX\=XXX\=XXXX\=\kill
      \textbf{UPEC-DIT-Unrolled-IO($Z, Y_C, \phi$)}:\\
      \textcolor{blue}{assume:} \\
      \> \textcolor{blue}{at} $t$: \>\>\>\textit{State\_Equivalence($Z$)}; \\
      \> \textcolor{blue}{at} $t$: \>\>\>\textit{Input\_Constraints($\phi$)}; \\
      \textcolor{blue}{prove:} \\
      \> \textcolor{blue}{during} $t$..$t$+$k$: \>\>\>\textit{Control\_Output\_Equivalence($Y_C$)}; \\
    \end{tabbing}
  \end{minipage}
  \caption{Unrolled UPEC-DIT property only considering I/O-behavior}
  \label{fig:upec-dit-unrolled-property-io}
\end{figure}

The idea of unrolled proofs is shown in
Fig.~\ref{fig:upec-dit-unrolled-property-io} and is a straightforward
implementation of Def.~\ref{def:data-oblivious}. %
In this property, \textit{all} state signals~$Z$ are initialized to
equal but arbitrary values between the two instances. %
This is denoted by the \textit{State\_Equivalence()} macro. %
We then prove that for a maximum latency~$k$, the two instances
maintain equal control outputs~$Y_C$. %
We choose~$k$ to be greater or equal to the length of the longest
\ac{hw} operation in the design. %
If this property fails, it means that the difference in~$Y_C$ must
originate from the data inputs~$X_D$, since this is the only source of
discrepancy between the two instances. %
In this case, the property checker returns a counterexample which
guides the verification to the root cause by highlighting the
deviating values. %

The great advantage of this variant of UPEC-DIT is that it does not
require an iterative partitioning of internal state signals~$Z$. %
Therefore, the only manual steps are partitioning the system interface
and choosing a maximum latency~$k$. %
Everything else can be automated. %
  For many low-complexity designs, such as functional units or
accelerators, this approach can provide exhaustive proofs. %
It can also serve as a quick initial test for larger systems, as most
timing channels become visible after only a few cycles. %
Unfortunately, this approach can run into scalability problems
because the full propagation paths from input to output can be too long
in more complex systems such as processor cores. %

\begin{figure}[ht]
  \centering
  \begin{minipage}{0.9\linewidth}
    \fontsize{9}{11}\selectfont
    \begin{tabbing}
      XX\=XXXXXX\=XXX\=XXXX\=\kill
      \textbf{UPEC-DIT-Unrolled($Z, Z_C, Y_C, \phi$)}:\\
      \textcolor{blue}{assume:} \\
      \> \textcolor{blue}{at} $t$: \>\>\>\textit{State\_Equivalence($Z$)}; \\
      \> \textcolor{blue}{at} $t$: \>\>\>\textit{Input\_Constraints($\phi$)}; \\
      \textcolor{blue}{prove:} \\
      \> \textcolor{blue}{during} $t$..$t$+$k$: \>\>\>\textit{Control\_State\_Equivalence($Z_C$)}; \\
      \> \textcolor{blue}{during} $t$..$t$+$k$: \>\>\>\textit{Control\_Output\_Equivalence($Y_C$)}; \\
    \end{tabbing}
  \end{minipage}
  \caption{Unrolled UPEC-DIT property with internal state signals}
  \label{fig:unrolled-upec-dit-property-states}
\end{figure} 

A trade-off between computational complexity and a decreased number of false counterexamples is presented in Fig.~\ref{fig:unrolled-upec-dit-property-states}. %
This variant of UPEC-DIT also starts by initializing \textit{all} state signals~$Z$ to equal but arbitrary values, and thus considers propagation paths starting from the data inputs. %
In this case however, we perform a partitioning of~$Z$ into~$Z_C$ and~$Z_D$ based on Alg.~\ref{alg:upec-dit}. %
Having an internal representation of the control-flow allows for a much earlier detection of data-dependent side-effects. %
Furthermore, isolating the source of the discrepancy to the input makes the returned counterexamples more intuitive and less likely to be spurious. %
 
Unfortunately, this approach does not scale well for complex systems beyond a few clock cycles. %
Nonetheless, this unrolled method can serve as a basis for the inductive proof, since the deduced partitioning of~$Z$ is the same for both variants. %
Therefore, it often makes sense to start with the unrolled approach and set~$k=1$. %
The verification engineer can then iteratively increase~$k$ until no new counterexamples appear or the computational cost becomes prohibitive. %
Then the transition to the inductive method is made by initializing~$Z_C$ in Line~\ref{line:init} of Alg.~\ref{alg:upec-dit} with the remaining control state signals. %
In our experience, starting with an unrolled model makes the counterexamples more intuitive because it shows longer propagation paths starting from the inputs. %
 This can be especially helpful early in the methodology, when the verification engineer has little to no knowledge or intuition about the design under verification. %
We have omitted the invariants in Fig.~\ref{fig:upec-dit-unrolled-property-io} and Fig.~\ref{fig:unrolled-upec-dit-property-states} because they are usually very simple when all state signals $Z$ are initialized to equal values. \vspace{1ex}%
  \par Finally, we would like to point out that the unrolled property
  (Fig. 6) can also be used in an effective \emph{bug hunting} approach that trades formal/exhaustive coverage for efficiency. %
  Instead of executing the full iterative algorithm that refines the
  set of state signals systematically (required for formal
  exhaustiveness), we can specify
  \textit{Control\_State\_Equivalence()} using a set of state signals
  identified as control manually, and run the proof. %
  The verification engineer determines the control variables based on
  knowledge and experience. %
  Obvious examples of control signals are stall variables in a
  processor pipeline. %
  Empirical evidence from our experiments shows that almost all timing
  vulnerabilities manifest themselves in only a handful of control
  signals. %
  While this short-cut over the formal algorithm bears a certain risk
  of missing corner-case behavior, it avoids the potentially laborious
  iterative procedure and produces high-quality results fast. %
  It may
  be a viable option in certain practical scenarios. %

\subsection{Black-Boxing}
\label{subsec:black-boxing}

Black-boxing can significantly increase the scalability of the formal
proof. %
Essentially, black-boxing excludes the functionality of certain
components from the system, reducing the complexity of the
computational model. %
When a module is black-boxed, its inputs become outputs of the system
under verification. %
Likewise, any output of the module connected to the rest of the system
is now considered as an open input, since the functionality of that
component is no longer considered. %
In particular, for state-heavy submodules, such as caches, their
black-boxing can greatly simplify the overall state space for the
formal proof. %

\begin{figure}[ht]
  \centering
  \includegraphics[width=0.95\linewidth]{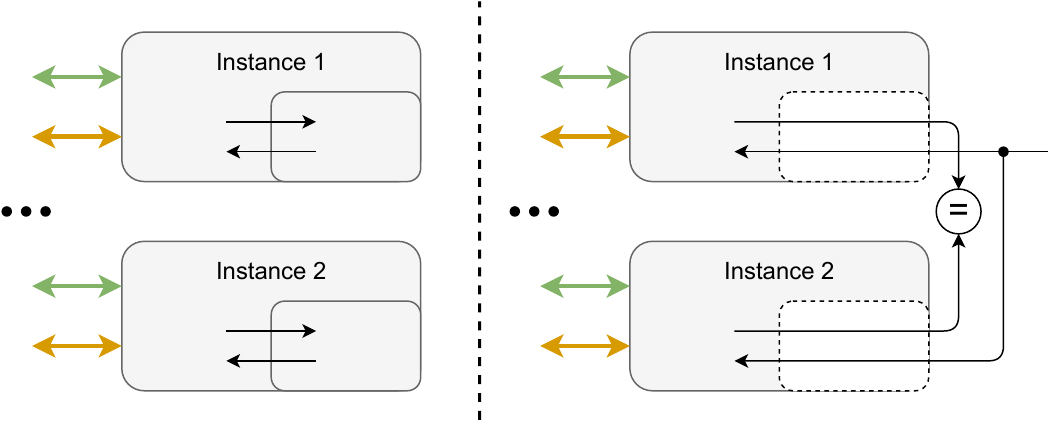}
  \caption{Computational Model before and after sound black-boxing}
  \label{fig:black-boxing}
\end{figure}

Abstraction using black-boxing is widely supported by commercial and academic formal tools. %
However, conventional formal verification often requires an abstract representation of the black-boxed component to model the behavior partially at the interface between the component and the rest of the system. %
Over- or under-approximation of such an abstract representation can lead to false counterexamples or verification gaps, respectively. %
Fortunately, the 2-safety computational model of UPEC-DIT allows for \textit{sound} black-boxing, ensuring that no security violations are missed. %
We only need to monitor the interface of the black-boxed component, as shown in Fig.~\ref{fig:black-boxing}. %
Moreover, false counterexamples can be avoided by assuming that the outputs of the component produce equal values in the two instances, as long as no discrepancy propagates to the inputs of the component. %

If a counterexample is produced that shows a difference at the
black-box inputs, the verification engineer can decide to either undo
the black-boxing or examine the module individually. %
The first option requires less effort but results in higher computational complexity, while the second option requires more manual
effort but can lead to a better understanding of the system and
simpler counterexamples. %
We will explore the second option further in the following
subsection. %

\subsection{Modularization}

Setting up formal proofs from scratch for a very large system can be a
difficult task. %
Therefore, it is often advisable to decompose the problem and to first
look at individual components that are "suspicious" or critical to
security. %
Examples would be a cryptographic accelerator or, in the case of a
processor, the various functional units of the pipeline. %

Investigating an individual component results in a less complex computational model, simpler counterexamples, and helps to establish a better understanding of the system. %
We can utilize the same inductive approach as elaborated in Sec.~\ref{sec:methodology} or, if computational complexity permits, the unrolled approach described in Sec.~\ref{subsec:unrolled-proofs}. %
If a counterexample is found for a single module, it is very likely that it also indicates a security threat to the entire system. %
If a component turns out to be data-oblivious, we can use this information to simplify our computational model of the system using black-boxing~(Sec.~\ref{subsec:black-boxing}). %
For this purpose, we consider the control (data) inputs of the black-box as control (data) outputs of the system and the control (data) outputs of the black-box as control (data) inputs of the system. %
Therefore, we can skip the data propagation through the black box and use its data output as a new source of discrepancy. %
This allows us to systematically partition the formal proof in a divide-and-conquer fashion, making it scalable even for very large systems. %

\subsection{Further Optimizations}
	
Two other well-known optimization
concepts can be applied to UPEC-DIT. %
First, each property can be divided into several independent, parallelizable properties that
check whether propagation into a particular signal is possible. %
This means that, instead of verifying all possible
propagation paths at once, we generate and
verify an individual property for each~$z \in Z_C$. %
Second, a \emph{cone-of-influence} reduction can be used to reduce
the number of candidates in the property commitment. %
Instead of checking for
\textit{Control\_State\_Equivalence()}, we can check only for the
equivalence of all state signals that are in the fan-out of
$I \cup Z \backslash Z_C$. %

\section{Experiments}
\label{sec:experiments}

All experiments are available in our GitHub repository~\cite{web-upec-dit}. %

\subsection{Example: SHA512 Core}

  We begin our practical evaluation with a practical example
    demonstrating the proposed methodology in detail, namely an open-source implementation of a
  cryptographic accelerator implementing the SHA512
  algorithm~\cite{web-opencores-sha}. %
  A timing channel in such a core can create severe security flaws, as
  it can significantly reduce the strength of the underlying
  encryption. %
  Therefore, we want to exhaustively verify that no data-dependent
  timing behavior exists in this accelerator. %

We begin by looking at the interface of the module along with its specification. %
The core implements 5 inputs (\textit{clk\_i}, \textit{rst\_i}, \textit{text\_i}, \textit{cmd\_i} and \textit{cmd\_w\_i}) and~2 outputs (\textit{text\_o} and \textit{cmd\_o}). %
After referring to the documentation, we mark the clock, reset and handshaking signals (\textit{cmd}) as \textit{control}, while the plain and cipher text ports are marked as \textit{data}. %
Our goal is to prove that the accelerator is data-oblivious according to Def.~\ref{def:data-oblivious}, i.e., the data input \textit{text\_i} has no influence on the control output \textit{cmd\_o}. %

The next step is to generate our 2-safety computational model (Fig.~\ref{fig:computational-model}), in which the control inputs are constrained to be equal, while the data input remains unconstrained. %
We also generate a macro for \textit{State\_Equivalence()} that constrains each of the 37 state-holding signals (2162 flip-flops) to equal values between the two instances at the start of our proof. %
Given the I/O partitioning, both steps can be fully automated with dedicated tool support. %

We can now choose to formulate either a single-cycle or an unrolled proof. %
For an exhaustive proof over multiple clock cycles, we have to unroll the model to its sequential depth, as elaborated in Sec.~\ref{subsec:unrolled-proofs}. %
According to the specification, one encryption operation has a latency of 97 clock cycles. %
Since the complexity of this module is rather low, it is still possible to unroll the formal proof for several hundred cycles. %
For more complex accelerators, this is usually not feasible. %

Therefore, to show the effectiveness of the UPEC-DIT methodology, we
decide to iteratively create an inductive single-cycle proof using
Alg.~\ref{alg:upec-dit}. %
We begin by setting up %
the UPEC-DIT-Step property (Fig.~\ref{fig:upec-dit-property} in Sec.~\ref{subsec:upec-dit-property}) %
in such a way that \textit{Control\_State\_Equality()} considers all
37 state holding signals~$Z$ %
of the design and \textit{Control\_Output\_Equality()} ensures the
equality of \textit{cmd\_o}. %
After running the verification tool, we receive a first counterexample
showing a discrepancy propagation to %
two internal pipeline buffers (\textit{W3} and \textit{Wt}). %
Since these are used to store intermediate results of the encryption,
we classify them as \textit{data}, exclude them from the macro and rerun the
proof. %
 
After several iterations of checking the UPEC-DIT-Step property%
, a fixed point is reached with only 5 out of~37 signals left in~$Z_C$: %
\textit{busy}, \textit{cmd}, \textit{read\_counter}, \textit{round}
and \textit{Kt}. %
All other signals only store intermediate results and were therefore
marked as data and removed from the proof. %
This means that even though a vast majority of the design can be in an
arbitrary state, the timing behavior only depends on certain control
registers inside the design. %

The last step is to verify the induction base with %
the UPEC-DIT-Base property (Fig.~\ref{fig:upec-dit-base} in Sec.~\ref{subsec:induction-base}), which also holds. %
Hence, we successfully showed that the core operates
timing-independent w.r.t.\ its input data. %
With some experience, the entire proof procedure can be completed in
less than an hour. %

\subsection{Functional Units and Accelerators}
\label{subsec:fu-and-accelerators}

\begin{table}
  \small
  \begin{center}
    \caption{Results for functional units}
    \label{tab:functional-unit-results}
    \vspace{-1ex}
    \begin{tabular}{lcrrr}
      \hline &&&& \\ [-2ex]
      \hline &&&& \\ [-2ex]
      \textbf{Design} & \textbf{DIT} & \textbf{State bits}  & \textbf{Time} & \textbf{Mem} \\ 
      \hline &&&& \\ [-2ex]
      BasicRSA         & \textcolor{red}{\xmark}          & 532  & \textless 1s &  589 MB \\
      SHA1             & \textcolor{ColorCmark}{\cmark}   & 911  & \textless 1s &  306 MB \\
      SHA256           & \textcolor{ColorCmark}{\cmark}   & 1103 & \textless 1s &  296 MB \\
      SHA512           & \textcolor{ColorCmark}{\cmark}   & 2162 & \textless 1s &  329 MB \\
      AES (secworks)   & \textcolor{ColorCmark}{\cmark}   & 2472 & 4s           &  994 MB \\
      AES (OpenCores)  & \textcolor{ColorCmark}{\cmark}   & 554  & \textless 1s &  819 MB \\
      FWRISCV MDS-Unit & \textcolor{orange}{\textbf{(!)}} & 331  & \textless 1s &  596 MB \\
      ZipCPU Div-Unit  & \textcolor{red}{\xmark}          & 142  & 11s          & 1347 MB \\
      CVA6 Div-Unit    & \textcolor{red}{\xmark}          & 209  & \textless 1s &  580 MB \\
      \hline &&&& \\ [-2ex]
      \hline &&&& \\ [-2ex]
    \end{tabular}
  \end{center}
  \begin{minipage}{1.0\linewidth}
    For each entry, we list whether or not the design has data-independent timing
    (DIT), number of state bits, total proof time and peak
    memory usage. %
  \end{minipage}
  \vspace{-3ex}
\end{table}

Tab.~\ref{tab:functional-unit-results} shows the results for several \acp{fu} and accelerators. %
All of these experiments were conducted using the unrolled approach as elaborated in Sec.~\ref{subsec:unrolled-proofs}. %

The first design, \textit{BasicRSA}, was taken from \textit{OpenCores}~\cite{web-opencores-basicrsa}. %
It implements an accelerator for the RSA cryptosystem. %
This module computes the modular exponentiation needed for the encryption algorithm in a \textit{square-and-multiply} fashion. %
While this approach is very efficient, it makes the latency directly dependent on the number of 1s in the exponent, i.e., the value of the key. %
In addition, the timing of each individual multiplication depends on the intermediate results, since the submodule terminates earlier for smaller multipliers. %
Therefore, the overall timing depends not only on the exponent, but also on the base and modulus operands. %
In our experiment, we split the proof into three properties to show the dependence of each data input separately. %

\textit{SHA1}, \textit{SHA256} and \textit{SHA512}~\cite{web-opencores-sha} implement accelerators for standardized variants of the \textit{Secure Hash Algorithm}, which differ in the number of rounds and in the length of their message digests. %
We also looked at two different accelerator implementations~\cite{web-secworks-aes, web-opencores-aes} of the \textit{Advanced Encryption Standard (AES)}. %
These symmetric cryptosystems are naturally quite resilient to timing side channels due to their round-based algorithms. %
However, although unlikely in practice, the microarchitecture of these accelerators could still implement data-dependent side-effects. %
All of these accelerators were proven by UPEC-DIT to have data-independent timing, which met our initial expectations. %

The \textit{Multiplication-Division-Shifting-Unit} is part of the \textit{Featherweight \RISCV{}}~\cite{web-fwriscv} project. %
Its goal is to build a resource-efficient implementation for FPGAs. %
All of its operations take multiple clock cycles, shifting only one bit in each cycle. %
In this \ac{fu}, multiplication and division were proven to execute data-independently, requiring 33 cycles to complete. %
However, UPEC-DIT produced a counterexample for shift operations, as the timing is directly dependent on the shift amount. %
This violates the common assumption of constant-time programming that shift operations are data-oblivious. %
The reason for this is that most modern processors employ \textit{barrel shifters}, i.e., they are able to perform an $N$-bit shift in a single clock cycle. %

The last two designs are \textit{serial division units} taken from the \textit{ZipCPU}~\cite{web-zipcpu} and \textit{CVA6}~\cite{2019-ZarubaBenini} open-source projects. %
Both \ac{fu}s showed strong dependencies of their timing w.r.t.\ their operands. %
\textit{ZipCPU} implements an early termination when dividing by zero. %
In \textit{CVA6}, the \ac{fu} is optimized to perform as few subtractions as possible. %
At the start of the division, the number of subtractions required is computed based on the operand values. %
Essentially, the greater the difference in size between the two operands, the more subtractions must be performed. %
As a result, the latency of the \ac{fu} is strongly data-dependent. %

In all case studies, UPEC-DIT was applied without \textit{a~priori} knowledge of the designs. %
It systematically guided the user to the points of interest. %
Even though the designs have operations taking up to~192~cycles, proof time and memory requirements remained insignificant. %
Furthermore, operations like multiplication, which are usually a bottleneck for formal tools, did not cause any complexity issues. %
This is by merit of the proposed 2-instance computational model which abstracts from functional signal valuations and only considers the difference between the two instances. %

\subsection{In-Order Processor Cores}

\begin{table}
  \small
  \begin{center}
    \caption{Results for in-order processors}
    \label{tab:in-order-processor-results}
    \vspace{-1ex}
    \begin{tabular}{lrrrrr}
      \hline &&&&& \\ [-2ex]
      \hline &&&&& \\ [-2ex]
      & \textbf{FW} & \textbf{IBEX} & (+DIT) & \textbf{SCARV} & \textbf{CVA6} \\ 
      \hline &&&&& \\ [-2ex]
      I-Type  & \textcolor{ColorCmark}{\cmark} %fwriscv                      
      & \textcolor{ColorCmark}{\cmark} %ibex (no dit)  
      & \textcolor{ColorCmark}{\cmark} %ibex (dit)     
      & \textcolor{ColorCmark}{\cmark} %scarv   
      & \textcolor{ColorCmark}{\cmark} %cva6
      \\
      R-Type  & \textcolor{ColorCmark}{\cmark}/\textcolor{red}{\xmark} %fwriscv     
      & \textcolor{ColorCmark}{\cmark}/\textcolor{ColorCmark}{\cmark} %ibex (no dit) 
      & \textcolor{ColorCmark}{\cmark}/\textcolor{ColorCmark}{\cmark} %ibex (dit)  
      & \textcolor{ColorCmark}{\cmark}/\textcolor{ColorCmark}{\cmark} %scarv 
      & \textcolor{ColorCmark}{\cmark}/\textcolor{ColorCmark}{\cmark} %cva6 
      \\
      Mult    & \textcolor{ColorCmark}{\cmark}/\textcolor{ColorCmark}{\cmark} %fwriscv   
      & \textcolor{ColorCmark}{\cmark}/\textcolor{red}{\xmark} %ibex (no dit) 
      & \textcolor{ColorCmark}{\cmark}/\textcolor{ColorCmark}{\cmark} %ibex (dit)  
      & \textcolor{ColorCmark}{\cmark}/\textcolor{ColorCmark}{\cmark} %scarv 
      & \textcolor{ColorCmark}{\cmark}/\textcolor{ColorCmark}{\cmark} %cva6 
      \\
      Div     & \textcolor{ColorCmark}{\cmark}/\textcolor{ColorCmark}{\cmark} %fwriscv   
      & \textcolor{ColorCmark}{\cmark}/\textcolor{red}{\xmark} %ibex (no dit) 
      & \textcolor{ColorCmark}{\cmark}/\textcolor{ColorCmark}{\cmark} %ibex (dit)  
      & \textcolor{ColorCmark}{\cmark}/\textcolor{ColorCmark}{\cmark} %scarv 
      & \textcolor{red}{\xmark}/\textcolor{red}{\xmark} %cva6 
      \\
      Load    & \textcolor{orange}{\textbf{(!)}} %fwriscv                           
      & \textcolor{orange}{\textbf{(!)}} %ibex (no dit)   
      & \textcolor{orange}{\textbf{(!)}} %ibex (dit)    
      & \textcolor{red}{\xmark} %scarv   
      & \textcolor{red}{\xmark} %cva6 
      \\
      Store   & \textcolor{orange}{\textbf{(!)}}/\textcolor{ColorCmark}{\cmark} %fwriscv 
      & \textcolor{orange}{\textbf{(!)}}/\textcolor{ColorCmark}{\cmark} %ibex (no dit) 
      & \textcolor{orange}{\textbf{(!)}}/\textcolor{ColorCmark}{\cmark} %ibex (dit)  
      & \textcolor{red}{\xmark}/\textcolor{red}{\xmark} %scarv 
      & \textcolor{red}{\xmark}/\textcolor{ColorCmark}{\cmark} %cva6 
      \\
      Jump    & \textcolor{orange}{\textbf{(!)}} %fwriscv                                                     
      & \textcolor{ColorCmark}{\cmark} %ibex (no dit)   
      & \textcolor{ColorCmark}{\cmark} %ibex (dit)    
      & \textcolor{orange}{\textbf{(!)}} %scarv   
      & \textcolor{red}{\xmark} %cva6 
      \\
      Branch  & \textcolor{red}{\xmark}/\textcolor{red}{\xmark} %fwriscv       
      & \textcolor{red}{\xmark}/\textcolor{red}{\xmark} %ibex (no dit) 
      & \textcolor{ColorCmark}{\cmark}/\textcolor{ColorCmark}{\cmark} %ibex (dit)  
      & \textcolor{red}{\xmark}/\textcolor{red}{\xmark} %scarv 
      & \textcolor{red}{\xmark}/\textcolor{red}{\xmark} %cva6 
      \\
      \hline &&&&& \\ [-2ex]
      State bits & 3086 & 1019 & 1021 & 2334 &  682849 \\
      AT         & 0:03 & 2:06 & 4:24 & 3:27 & 1:35:41 \\
      WCT        & 0:04 & 5:11 & 6:40 & 8:02 & 3:06:43 \\
      Mem        &  1.7 &  4.5 &  4.3 &  2.1 &    11.9 \\
      \hline &&&&& \\ [-2ex]
      \hline &&&&& \\ [-2ex]
    \end{tabular}
  \end{center}
  \begin{minipage}{1.0\linewidth}
    In each experiment (separate proof), UPEC-DIT proved that the instruction class
    executes data-independently either
    always~(\textcolor{ColorCmark}{\cmark}), only under certain SW
    restrictions~(\textcolor{orange}{\textbf{(!)}}), or not at all, i.e, it
    depends on its operands (\textcolor{red}{\xmark}). %
    Multi-operand instructions denote rs1/rs2 separately, as they can
    have different impact on timing. %
    We also report the number of state bits in the original design,
    the \textit{average time (AT)} and \textit{worst-case time (WCT)}
    of a single proof (HH:MM:SS) as well as the peak memory
    requirements (GB). %
  \end{minipage}
  \vspace{-2ex}
\end{table}

We investigated four different open-source in-order \RISCV{} processors from
low to medium complexity, as shown in Tab.~\ref{tab:in-order-processor-results}. %
The \textit{Ibex} processor is listed twice, as it comes with a
\textit{data-independent timing (DIT)} security feature, which we
examined separately. %
All of these experiments were conducted using the unrolled approach as elaborated in Sec.~\ref{subsec:unrolled-proofs}. %
The results show that time and memory requirements are still moderate, even in the case of a medium-sized processor. %

The first design we investigated is the sequential
\textit{Featherweight \RISCV{}}~\cite{web-fwriscv} processor which
aims at balancing performance with FPGA resource utilization. %
As our results show, most instructions execute independently of their
input data. %
However, there was one big exception, namely, R-Type shift
instructions. %
For area efficiency, the implementation shares a single shifting unit
for multiplication, division and shifting. %
The shifting unit can only shift one bit in each cycle (cf.~\ref{subsec:fu-and-accelerators}), which results in
data-dependent timing depending on the shift amount (rs2). %
We singled out the shift instructions in a separate proof and showed
that other R-Type instructions like addition do, in fact, preserve
data-independent timing. %
Note that I-Type shift instructions also execute with dependence on
the shift amount. %
The shift amount, however, is specified in the (public) immediate
field of the instruction. %
Consequently, since the program itself is viewed as \textit{public},
I-Type shifting executes data-obliviously. %
Load, Store and Jump (JALR) can cause an exception in case of a
misaligned address, while Branches incur a penalty if a branch is not taken. %

The \textit{Ibex \RISCV{} Core}~\cite{web-ibex} is an extensively
verified, production-quality open-source 32-bit \RISCV{} CPU. %
It is maintained by the \textit{lowRISC} not-for-profit company and
deployed in the OpenTitan platform. %
 It is highly configurable and comes with a variety of security
features, including a %
\textit{data-independent timing (DIT)} mode. %
When activating this mode during runtime, execution times of all
instructions are supposed to be independent of input data. %
In our experiments, we apply
UPEC-DIT for both inactive and active DIT mode and use the default \textit{"small"}
configuration, with the \textit{"slow"} option for multiplication. %

When the DIT mode is turned off, we found three cases of
data-dependent execution time: %
\begin{itemize}
  \item Division and (slow) multiplication implement fast paths for
    certain scenarios. %
  \item Taken branches cause a timing penalty, as the prefetch buffer
    has to be flushed. %
  \item Misaligned loads and stores are split into two aligned memory
    accesses. % 
\end{itemize}

The first two issues are solved when DIT mode is active, as seen in
Tab.~\ref{tab:in-order-processor-results}. %
All fast paths are deactivated and non-taken branches now introduce a
delay to equal the timing of taken branches. %
However, the timing violation for misaligned memory accesses is not
addressed. %

When running Ibex in DIT mode, data-oblivious memory accesses require
special measures, such as the integration of the core with a
data-oblivious memory sub-system. %
For example, an oblivious RAM controller~\cite{2015-FletcherRen.etal}
makes any memory access pattern computationally indistinguishable from
any other access pattern of the same length. %
However, our experiments with UPEC-DIT reveal that even with such
strong countermeasures in place, Ibex still suffers from a side
channel in the case of memory accesses that are misaligned. %
This is because the core creates a different number of memory requests
for aligned and misaligned accesses. %
We reported this issue to the lowRISC team and suggested to disable
the misaligned access feature for DIT mode. %
With this fix, the HW would remain secure even in case that a
faulty/malicious SW introduces a misaligned access. %
The lowRISC team refined the documentation and will consider the
proposed fix for future updates of the core. %

The \textit{SCARV}~\cite{web-scarv} is a 5-stage single-issue in-order
CPU core, implementing the \RISCV{} RV32IMC instruction sets. %
We prove that most instructions do not leak information through timing
by running UPEC-DIT on the core. %
Taken branches, however, use additional cycles due to a pipeline flush. %
Memory access instructions cause exceptions if their addresses are misaligned. %
To our surprise, however, not only the address but also the \textit{value} of a store instruction can cause data-dependent behavior. %
The reason for this is that loads and stores to a particular address region are interpreted as memory-mapped I/O accesses and do not issue a memory request. %
This can be used to set a \ac{sw} interrupt timer depending on the store value. %
As a result, the timing of such an interrupt is directly correlated to the data of the store instruction. %

\textit{CVA6}~\cite{2019-ZarubaBenini}, also known as \textit{Ariane},
Having almost 700\,k state bits, the design itself already pushes the limits of formal property checkers. %
To make things worse, a straightforward 2-safety circuit model would
have twice as many state bits. %
Fortunately, UPEC-DIT allows for \textit{sound} black-boxing to cope
with complexity issues~(cf.~Sec.~\ref{subsec:black-boxing}). %
Black-boxing reduces the computational model to~24\,k state bits. %
For load and store instructions, a couple of exception scenarios
showed up: %
Access faults (PMP), misaligned addresses or page faults. %
Besides a data-dependent division and timing variations by
mispredicted branches UPEC-DIT also found an interesting case in which
a load can be delayed. %
In order to prevent RAW hazards, whenever a load is issued, the store buffer is
checked for any outstanding stores to the same offset. %
If any exist, the load is, conservatively, stalled until the stores have been committed. %
However, this can cause a timing delay in case of a matching offset,
even if both memory accesses go to different addresses. %

\subsection{UPEC-DIT with Inductive Proofs}

In this subsection, we extend our experiments to the \ac{boom}~\cite{2020-ZhaoKorpan.etal}, a superscalar \RISCV{} processor with \ac{fp} support, a deep 10-stage pipeline and out-of-order execution. % 

In a first attempt, the same unrolled approach (cf.~Sec.~\ref{subsec:unrolled-proofs}), as above for in-order processors, was applied to \ac{boom}, 
and UPEC-DIT was able to prove the data-obliviousness of basic arithmetic instructions and multiplication. %
However, the high design complexity caused by the deep pipeline pushed the formal tools to their limits, with some proof times exceeding 20 hours. %
Fortunately, with a few design-specific adjustments, it was still possible to formally verify the absence of any data-dependent side-effects. %
These optimizations included splitting the verification into individual proofs for the integer and \ac{fp} pipelines, since these are implemented as separate modules in \ac{boom}. %
To do this, we constrained the outputs of the \ac{fp} register file to be equal for the integer pipeline verification and vice versa. %
After unrolling for~7 clock cycles, no new propagation was detected in either pipeline. %
Furthermore, a sound black-boxing (cf.~Sec.~\ref{subsec:black-boxing}) of complex components, such as the \ac{rob} and caches, was performed, significantly reducing the complexity of the computational model: %
While a single instance of \ac{boom} has more than 500k state bits, our final 2-safety model contained only about 47k state bits. %

Black-boxing the \ac{rob} not only reduces the size of the computational model, 
but also drastically reduces the complexity of spurious counterexamples caused by the symbolic initial state. %
During normal operation of an out-of-order processor, the \ac{rob} reflects the control state of the entire system. %
Assuming an arbitrary initial state can therefore lead to many false counterexamples %
where the \ac{rob} in no way reflects the state of the computational pipeline. %
Fortunately however, the actual state of the \ac{rob} is of no concern to UPEC-DIT, as long as instructions commit to it equally between the two instances. % 
We can therefore create a black-box and consider its inputs in our proof statement. %
 
Nevertheless, the long proof times caused by the high complexity made us rethink the approach. %
Thus, we extended the methodology and developed the inductive proof, as presented in Sec.~\ref{sec:methodology}, which spans over only a single clock cycle. %
Eliminating the need to unroll the circuit drastically reduced the complexity of the computational model, diminishing individual proof times from over 20 hours to just a few seconds. %
In this transition, we also found that in most cases, data can only affect the control flow through a few specific channels. %
This insight made it easier to find meaningful invariants and constraints that restrict the over-approximated state space in order to avoid false counterexamples. %

\begin{table}
  \small
  \begin{center}
    \caption{Results for inductive proofs}
    \label{tab:inductive-proof-results}
    \vspace{-1ex}
    \begin{tabular}{lrrrrr}
      \hline &&&&& \\ [-2ex]
      \hline &&&&& \\ [-2ex]
      \textbf{Design} & \textbf{\#State Bits} & $|Z_D|$ & \textbf{\#BB} & \textbf{Time} & \textbf{Mem.}\\
      \hline &&&&& \\ [-2ex]
      BOOM (Int) & 46958 &  31 & 4 &  5s & 1.6GB \\
      BOOM (FP)  & 46958 & 143 & 4 &  5s & 2.0GB \\
      CVA6       & 23484 &  22 & 7 & 58s & 3.1GB \\
      \hline &&&&& \\ [-2ex]
      \hline &&&&& \\ [-2ex]
    \end{tabular}
  \end{center}
  \begin{minipage}{1.0\linewidth}
    For each experiment, the table shows the size of the final computational model, the number of (unconstrained) data signals within per instance, the number of black-boxes per instance, and the time and memory consumption of the final inductive step proof. %
  \end{minipage}
  \vspace{-3ex}
\end{table}

Tab.~\ref{tab:inductive-proof-results} shows the results in terms of
computational complexity for the inductive proofs. %
We also re-run our experiment on CVA6~\cite{2019-ZarubaBenini} to
illustrate the improvement in scalability compared to an unrolled
proof. %
As shown in the table, proof time and memory usage are reduced significantly by eliminating the need to unroll the model. %
In CVA6, data propagates through 22 internal signals~($Z_D$), which can take arbitrary values in the final proof. %
By assuming the equality of two internal signals, we
  excluded division and \ac{csr} operations from consideration. %
Branches and Load/Store instructions were excluded by black-boxing the branch unit and the \ac{lsu}. %

The experiments in \ac{boom} were split for the integer and \ac{fp} pipelines. %
Within the integer pipeline, UPEC-DIT detected data propagation
into~31 internal signals~($Z_D$). %
We excluded branches, division, and misaligned memory accesses after receiving counterexamples for each of these cases. %
Inside the \ac{fp} pipeline, data is propagated into~143 internal data
signals~($Z_D$). %
However, only counterexamples caused by \ac{fp} division and square root had to be excluded from the proof with constraints. %
 
For \ac{boom}, our results show that branch, integer division, \ac{fp} division, and square-root operations have data-dependent timing. %
We have not examined memory access instructions, as these are usually given special treatment in constant-time programming. %
The remaining integer arithmetic instructions (including multiplication) were formally proven to execute data-obliviously. %
An interesting case are \ac{fp} arithmetic instructions: %
Although their timing is independent of their operands, they are not data-oblivious. %
UPEC-DIT revealed that these instructions can still leave a side effect in the form of an exception flag in the \ac{fp} \ac{csr} named \textit{fcsr}. %
This is in compliance with the \ac{isa} specification of \RISCV{}. %
However, in order to create a timing side channel, a victim program would have to explicitly load these registers. %
Therefore, these instructions could, in fact, be used securely for constant-time programming, if a suitable \ac{sw} restriction was introduced. %

\section{Conclusion}
\label{sec:conclusion}

In this paper, we proposed UPEC-DIT, a novel methodology for formally verifying data-independent execution in RTL designs. %
We proposed an approach based on an inductive property over a single clock cycle, which facilitates a verification methodology scalable even to complex out-of-order processors.
We presented and discussed several techniques that can help the verification engineer simplify and accelerate the verification process. %
UPEC-DIT was evaluated against several open-source designs, ranging from small functional units to a complex out-of-order processor. %
While many of the implemented instructions execute as expected, UPEC-DIT uncovered some unexpected timing violations. %
Our future work will address the design of security-conscious hardware. %
We envision that UPEC-DIT, if integrated in standard design flows, can make significant contributions to restoring the trust in the hardware for confidential computing. %
\bibliographystyle{IEEEtran}
% Generated by IEEEtran.bst, version: 1.14 (2015/08/26)

\vspace{-7ex}

\begin{IEEEbiography}[{\includegraphics[width=1in,height=1.25in,clip,keepaspectratio]{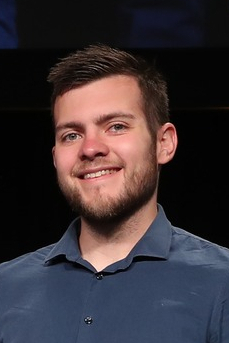}}]{Lucas Deutschmann}
	received the M.Sc.\ %
	degree in Electrical and Computer Engineering from RPTU Kaiserslautern-Landau, Kaiserslautern, Germany, in 2019. %
	He is currently pursuing a doctoral degree with Electronic Design
	Automation group at the same institution. %
	His current research interests include formal verification, hardware
	security, side channel attacks and data-oblivious computing. %
	\vspace{-7ex}
\end{IEEEbiography}

\begin{IEEEbiography}[{\includegraphics[width=1in,height=1.25in,clip,keepaspectratio]{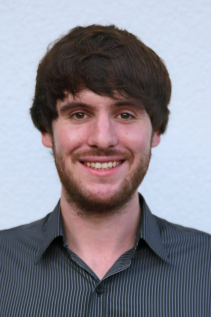}}]{Johannes
		M\"uller}
	received the Dipl.-Ing.\ %
	degree in Electrical and Computer Engineering from RPTU Kaiserslautern-Landau, Kaiserslautern, Germany, in 2018. %
	He is currently pursuing a doctoral degree with Electronic Design
	Automation group at the same institution. %
	His current research interests include formal verification, access
	control in SoCs and microarchitectural timing side channels. %
	\vspace{-7ex}
\end{IEEEbiography}

\begin{IEEEbiography}[{\includegraphics[width=1in,height=1.25in,clip,keepaspectratio]{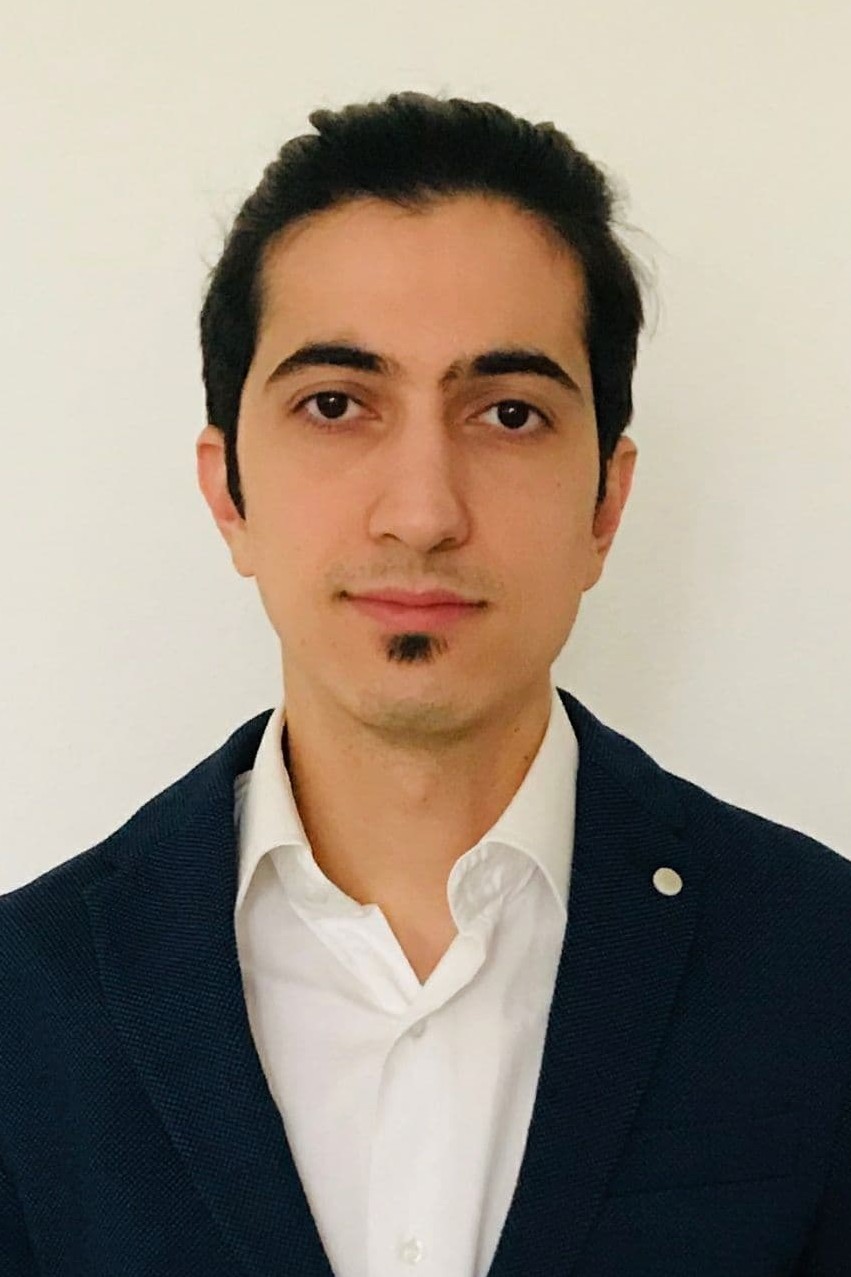}}]{Mohammad Rahmani Fadiheh}
	received the M.Sc.\ %
	degree in Electrical and Computer Engineering from RPTU Kaiserslautern-Landau, Kaiserslautern, Germany, in 2017, and the Dr.-Ing. degree in Electrical Engineering from the same institution in 2022. %
	He is currently working as a postdoctoral researcher in the Department of Electrical Engineering at Stanford University, Stanford, USA. %
	His current research interests include formal verification, hardware
	security, side channel attacks and secure hardware design. %
	\vspace{-7ex}
\end{IEEEbiography}

\begin{IEEEbiography}[{\includegraphics[width=1in,height=1.25in,clip,keepaspectratio]{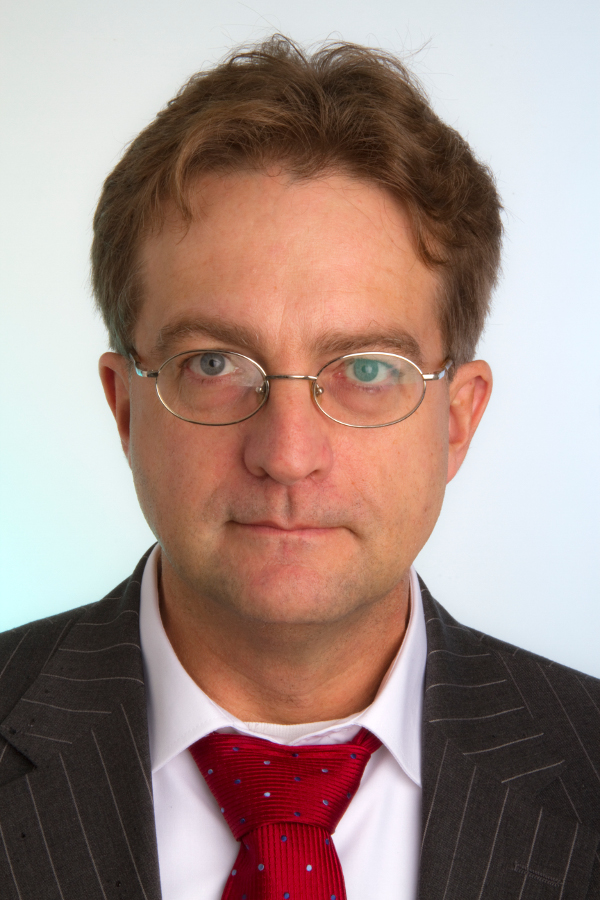}}]{Dominik Stoffel}
	received the Dipl.-Ing.\ %
	degree from the University of Karlsruhe, Karlsruhe, Germany, in
	1992, and the Dr.~phil.~nat degree from Johann-Wolfgang Goethe
	Universit\"at, Frankfurt, Germany, in 1999. %
	He has held positions with Mercedes-Benz, Stuttgart, Germany, and
	the Max-Planck Fault-Tolerant Computing Group, Potsdam, Germany. %
	Since 2001, he has been a Research Scientist and a Professor with
	the Electronic Design Automation group, RPTU Kaiserslautern-Landau, Kaiserslautern, Germany. %
	His current research interests include design and verification
	methodologies for Systems-on-Chip. %
	\vspace{-7ex}
\end{IEEEbiography}

\begin{IEEEbiography}[{\includegraphics[width=1in,height=1.25in,clip,keepaspectratio]{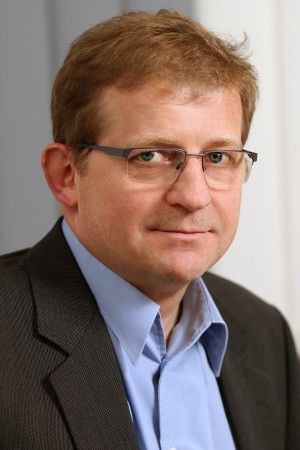}}]{Wolfgang Kunz}
	received the Dipl.-Ing.\ %
	degree in Electrical Engineering from University of Karlsruhe in
	1989, the Dr.-Ing. %
	degree in Electrical Engineering from University of Hannover in 1992
	and the Habilitation degree from University of Potsdam (Max Planck
	Society, Fault-Tolerant Computing Group) in 1996. %
	Since 2001 he is a professor at the Department of Electrical and
	Computer Engineering at RPTU Kaiserslautern-Landau, Kaiserslautern, Germany. %
	
	Wolfgang Kunz conducts research in the area of System-on-Chip design
	and verification. %
	His current research interests include verification-driven design
	methodologies for hardware and firmware, safety analysis and
	hardware security. %
	\vspace{-7ex}
\end{IEEEbiography}

\end{document}